\documentclass[intlimits,twoside,a4paper]{article}
\usepackage{amsthm}
\usepackage[cp1251]{inputenc}
\usepackage{xcolor}
\usepackage[eqsecnum]{cmpj3}

\usepackage{bm}

\theoremstyle{plain} 
\newtheorem{theorem}{Theorem}[section]
\newtheorem{proposition}{Proposition}[section]

\theoremstyle{definition}
\newtheorem{definition}{Definition}[section]

\theoremstyle{remark}
\newtheorem{remark}{Remark}[section]
\issue{2020}{23}{2}{23502}
\doinumber{10.5488/CMP.23.23502}

\title[A phase transition in a Curie-Weiss system]{A phase transition in a Curie-Weiss system with binary interactions\footnote{This paper is dedicated to Professor Ihor Mryglod on the occasion of his 60th birthday and in recognition of his significant  contribution to statistical physics, in particular, to the elaboration of new methods of statistical theory of fluids based on the collective variables method and investigation of liquid-gas critical phenomena in simple and multicomponent systems. }}

\author[Yu.V. Kozitsky,	M.P. Kozlovskii, O.A. Dobush] {Yu.V. Kozitsky\refaddr{label1},
	M.P. Kozlovskii\refaddr{label2}, O.A. Dobush\refaddr{label2}}
\addresses{
	\addr{label1} Institute of Mathematics, Maria Curie-Sk{\l}odowska University, Plac Marii Curie-Sk{\l}odowskiej 1, \\
	Lublin 20-031, Poland
	\addr{label2} Institute for Condensed Matter Physics of the National
	Academy of Sciences of Ukraine,\\ 1 Svientsitskii St., 79011 Lviv,
	Ukraine 
}
\date{Received January 30, 2020, in final form March 30, 2020}

\newcommand{\be}{\begin{equation}}
\newcommand{\ee}{\end{equation}}
\newcommand{\bea}{\begin{eqnarray}}
\newcommand{\eea}{\end{eqnarray}}

\usepackage{dsfont}
\usepackage{graphicx}
\usepackage{epstopdf}

\begin{document}

\maketitle

\begin{abstract}
	
A single-sort continuum Curie-Weiss system of interacting particles is studied. The particles are placed in the space $\mathds{R}^d$ divided into congruent cubic cells. For a region $V\subset
\mathds{R}^d$ consisting of $N\in \mathds{N}$ cells, every two particles contained in $V$ attract each other with intensity $J_1/N$. The particles contained in the same cell are subjected to
binary repulsion with intensity  $J_2>J_1$. For fixed values of the
temperature, the interaction intensities, and the chemical potential the thermodynamic phase is defined as a probability measure on the space of occupation numbers of cells, determined by a condition typical of Curie-Weiss theories. It is proved that the half-plane $J_1\,\times\,$\textit{chemical potential} contains phase coexistence
points at which there exist two thermodynamic phases of the system. An equation of state for this system is obtained.

\keywords equation of state, phase coexistence, mean field

%
\end{abstract}

\section{Introduction}
\label{sec:1}

The mathematical theory of phase transitions in continuum particle
systems has much fewer results as compared to its counterpart
dealing with discrete underlying sets like lattices, graphs, etc. It
is then quite natural that the first steps in such theories are
being made by employing various mean field models. In \cite{LebP},
the mean field approach was mathematically realized by using a
Kac-like infinite range attraction combined with a two-body
repulsion. By means of rigorous upper and lower bounds for the
canonical partition function obtained in that paper, the authors
derived the equation of state indicating the possibility of a
first-order phase transition. Later on, this result was employed in
\cite{Lebow,presutti} to go beyond the mean field, see  also
\cite{pulvirenti,PT} for recent results. Another way of realizing
the mean-field approach is to use Curie-Weiss interactions and then
appropriate the methods of calculating the asymptotics of integrals, cf.
\cite{EN}. Quite recently, this way was formulated as a coherent
mathematical theory based on the large deviation techniques, in the
framework of which the Gibbs states (thermodynamic phases) of the
system are constructed as probability measures on an appropriate
phase space, see \cite[section~II]{Opoku}.

In this work, we introduce a simple Curie-Weiss type model of a
single-sort continuum particle system in which the space
$\mathds{R}^d$ is divided into congruent (cubic) cells. For a
bounded region $V\subset\mathds{R}^d$ consisting of $N$ such cells,
the attraction between every two particles in $V$ is set to be
$J_1/N$, regardless of their positions. If such two particles lie in
the same cell, they repel each other with intensity $J_2>J_1$.
Unlike  \cite{LebP}, we deal with the grand canonical ensemble.
Therefore, our initial thermodynamic variables are the inverse
temperature $\beta = 1/k_\text{B} T$ and the physical chemical potential.
However, for the sake of convenience we employ the variables
$p=\beta J_1$ and $\mu = \beta\,\times\,$(\textit{physical chemical
	potential}) and define single-phase domains of the half-plane
$\{(p,\mu): p>0, \mu \in \mathds{R}\}$ (see definition \ref{1df}) by
a condition that ensures the existence of a unique $\bar{y}\in
\mathds{R}$, which determines a probability measure
$\mathbf{Q}_{p,\mu}$, given in (\ref{19a}) and (\ref{19}). In the
grand canonical formalism and in the approach of \cite{Opoku}, this
measure is set to be the thermodynamic phase of the system. The
points $(p,\mu)$ where the mentioned single-phase condition fails to
hold due to the existence of multiple $\bar{y}$ correspond to the
coexistence of multiple thermodynamic phases. In theorem~\ref{4pn},
we show that there exists $p_0>0$ such that $\mathcal{R}(p_0):=\{(p,
\mu): p\in (0,p_0), \mu \in \mathds{R}\}$ is a single-phase domain,
that is, there is no phase-coexistence point in the strip
$\mathcal{R}(p_0)$ for small enough attractions. Note that
for some models on graphs, see  the example given in
\cite{KK}, there exist multiple phases for each positive attraction.
Thereafter, in theorem \ref{1tm} we show that for each value of
$J_2/J_1:=a>1$, there exist points in which two phases do coexist.
Namely,  we show that there exists $p_1>0$ such that for each
$p\geqslant p_1$, there exists $\mu_\text{c} (p)\in \mathds{R}$ and small enough
$\epsilon >0$ such that the sets $\{(p,\mu): \mu\in ( \mu_\text{c} (p)-
\epsilon, \mu_\text{c} (p))\}$ and $\{(p,\mu): \mu\in (\mu_\text{c}(p), \mu_\text{c} (p)
+ \epsilon)\}$ lie in different single-phase domains, and that there
exist at least two $\bar{y}$ whenever $\mu = \mu_\text{c}(p)$. In section~\ref{sec:3}, we present a number of results of the corresponding numerical
calculations which illustrate the postulates proved in theorems \ref{4pn}
and \ref{1tm}.

\section{The model}
\label{sec:2}

By $\mathds{N}$, $\mathds{R}$ and $\mathds{C}$ we denote the sets of
natural, real and complex numbers, respectively. We also put
$\mathds{N}_0 = \mathds{N}\cup \{0\}$. For $d \in \mathds{N}$, by
$\mathds{R}^d$, we denote the Euclidean space of vectors $x=(x^1,
\dots, x^d)$, $x^i\in \mathds{R}$. In the sequel, its dimension $d$
will be fixed. By $ d x$ we mean the Lebesgue measure on
$\mathds{R}^d$.
In this section,
we use some tools of the analysis on configuration spaces whose main
aspects  can be found in \cite{Albev}.

\subsection{The integration}
\label{ssec:2.1} For $n\in \mathds{N}$, let $\Gamma^{(n)}$ be the
set of all $n$-point subsets of $\mathds{R}^d$. Every subset of this kind,
$\gamma\in \Gamma^{(n)}$ (called \emph{configuration}), is  an
$n$-element set of distinct points $x\in \mathds{R}^d$. Let also
$\Gamma^{(0)}$ be the one--element set consisting of the empty
configuration. Every $\Gamma^{(n)}$ is equipped with the topology
related to the Euclidean topology of $\mathds{R}^d$. Then, we define
\begin{equation*}
\Gamma_0 = \bigsqcup_{n \in \mathds{N}_0} \Gamma^{(n)},
\end{equation*}
that is, $\Gamma_0$ is the topological sum of the spaces
$\Gamma^{(n)}$. We equip $\Gamma_0$ with the corresponding Borel
$\sigma$-field $\mathcal{B}(\Gamma_0)$ that makes
$(\Gamma_0,\mathcal{B}(\Gamma_0))$ a standard Borel space. A
function $G: \Gamma_0 \to \mathds{R}$ is
$\mathcal{B}(\Gamma_0)$-measurable if and only if, for each $n\in
\mathds{N}$, there exists a symmetric Borel function
$G^{(n)}:(\mathds{R}^d)^n \to \mathds{R}$ such that $G(\gamma) =
G^{(n)}(x_1 , \dots, x_n)$ for $\gamma = \{x_ 1, \dots , x_n\}$. For
such a function $G$, we also set $G^{(0)}= G(\varnothing)$. The
Lebesgue-Poisson measure $\lambda$ on
$(\Gamma_0,\mathcal{B}(\Gamma_0))$ is defined by the relation
\begin{equation}
\label{2}
\int_{\Gamma_0} G (\gamma) \lambda ( \rd \gamma) = G^{(0)} +
\sum_{n=1}^\infty \frac{1}{n!} \int_{(\mathds{R}^d)^n} G^{(n)} ( x_1
, \dots , x_n) \rd x_1 \dots \rd x_n\,,
\end{equation}
which should hold for all measurable $G:\Gamma_0 \to \mathds{R}_+
:=[0, +\infty)$ for which the right-hand side of (\ref{2}) is
finite.

For some $c>0$, we let $\Delta= (-c/2, c/2]^d\subset \mathds{R}^d$
be a cubic cell of volume $\upsilon= c^d$ centered at the origin.
Let also $V\subset \mathds{R}^d$ be the union of $N\in \mathds{N}$
disjoint translates $\Delta_\ell$ of $\Delta$, i.e.,
\begin{equation*}
V = \bigcup_{\ell =1}^N \Delta_\ell.
\end{equation*}
As is usual for Curie-Weiss theories, cf. \cite{EN,Opoku}, the form
of the interaction energy of the system of particles placed in $V$
depends on $V$. In our model, the energy of a configuration $\gamma
\subset V$ is
\begin{eqnarray}
\label{4} W_N (\gamma) & = & \frac{1}{2} \sum_{x, y \in \gamma}
\Phi_N
(x, y), \nonumber \\[.2cm]
\Phi_N (x, y) & = & - J_1/N + J_2 \sum_{\ell =1}^N
\mathds{I}_{\Delta_\ell}(x) \mathds{I}_{\Delta_\ell}(y),
\end{eqnarray}
where $\mathds{I}_{\Delta_\ell}$ is the indicator of $\Delta_\ell$,
that is, $\mathds{I}_{\Delta_\ell}(x)=1$ if $x\in \Delta_\ell$ and
$\mathds{I}_{\Delta_\ell}(x) =0$ otherwise. For convenience, in
$W_N$ above we have included the self-interaction term
$\Phi_N(x,x)$, which does not affect the physics of the model. We
also write $W_N$ and $\Phi_N$ instead of $W_V$ and $\Phi_V$ since
these quantities depend only on the number of cells in $V$. The
first term in $\Phi_N$ with $J_1>0$ describes the attraction. By virtue
of the Curie-Weiss approach, it is taken equal for all particles.
The second term with $J_2>0$ describes the repulsion between two
particles contained in one and the same cell. That is, in our model
every two particles in $V$ attract one another independently of their
location, and repel if they are in the same cell. The intensities
$J_1$ and $J_2$ are assumed to satisfy the following condition
\begin{equation}
\label{5}
J_2 > J_1.
\end{equation}
The latter is to secure the stability of the interaction
\cite{Ruelle}, that is, to satisfy
\[
\int_V \Phi_N (x, y) \rd y >0, \qquad {\rm for} \ {\rm all} \ x \in V.
\]
Let $\beta=1/k_{\rm B}T$ be the inverse temperature. To optimize the
choice of the thermodynamic variables we introduce the following
ones
\begin{equation}
\label{7a}
p = \beta J_1\,, \qquad a = J_2/J_1\,,
\end{equation}
and the dimensionless chemical potential $\mu= \beta\,
\times\,$(physical chemical potential). Then,  $(p, \mu)\in
\mathds{R}_{+}\times \mathds{R}$ is considered to be the basic set of
thermodynamic variables, whereas $a$ and $\upsilon$ are model
parameters.

The grand canonical partition function in region $V$ is
\begin{eqnarray}
\label{6}
\Xi_N (p, \mu)& = & \int_{\Gamma_V} \exp\left[  \mu |\gamma| - \beta W_N (\gamma)
\right]  \lambda ( \rd \gamma) \nonumber\\
& = & \int_{\Gamma_V} \exp\left[  \mu |\gamma| +
\frac{p}{2N}|\gamma|^2 - \frac{ap}{2}\sum_{x,y\in \gamma}
\sum_{\ell=1}^N \mathds{I}_{\Delta_\ell}(x)\mathds{I}_{\Delta_\ell}(y) \right]  \lambda ( \rd \gamma), 
\end{eqnarray}
where $|\gamma|$ stands for the number of points in the configuration
$\gamma$, and $\Gamma_V$ is the subset of $\Gamma_0$ consisting of
all $\gamma$ contained in $V$. We write $\Xi_N$ instead of $\Xi_V$
for the  reasons mentioned above.

\subsection{Transforming the partition function}

Now we use a concrete form of the energy as in (\ref{4}) to bring
(\ref{6}) to a more convenient form. For a given $\ell =1, \dots ,
N$ and a configuration $\gamma\in \Gamma_V$, we set $\gamma_\ell =
\gamma \cap \Delta_\ell$, that is, $\gamma_\ell$ is the part of the
configuration contained in $\Delta_\ell$. Then, $|\gamma_\ell|$ will
stand for the number of points of $\gamma$ contained in
$\Delta_\ell$. Note that
\begin{equation}
\label{6a}
|\gamma_\ell| = \sum_{x\in \gamma_\ell} 1 = \sum_{x\in \gamma}
\mathds{I}_{\Delta_\ell}(x).
\end{equation}
Then, cf. (\ref{4}) and (\ref{6}),
\begin{eqnarray*}
	\sum_{x,y\in \gamma}  \Phi_N (x, y) = \sum_{\ell, \ell'
		=1}^N \sum_{x\in \Delta_\ell} \sum_{y \in \Delta_{\ell'}} \Phi_N (x,
	y) 
	 = - \frac{J_1}{N} \left( \sum_{\ell =1}^N
	|\gamma_\ell|\right)^2 + J_2 \sum_{\ell =1}^N \left\vert \gamma_\ell
	\right\vert^2. \nonumber
\end{eqnarray*}
To rewrite the integrand in (\ref{6}) in a more convenient form we
set
\begin{equation}
\label{7b}
F_N(\varrho, p, \mu) = \exp\left[ \frac{p}{2N}
\left(\sum_{\ell=1}^N \varrho_\ell \right)^2 +  \mu \sum_{\ell=1}^N
\varrho_\ell - \frac{a p }{2}\sum_{\ell=1}^N \varrho_\ell^2
\right] ,
\end{equation}
where $\varrho \in \mathds{N}_0^N$ is a vector with nonnegative
integer components $\varrho_\ell$, $\ell = 1, 2, \dots , N$. Then,
(\ref{6}) takes the form
\begin{equation}
\label{8}
\Xi_N( p, \mu) = \int_{\Gamma_V} F_N \big(\nu(\gamma), p, \mu\big) \lambda ( \rd
\gamma),
\end{equation}
where $F_N$ is as in (\ref{7b}) and $\nu(\gamma)\in \mathds{N}_0^N$
is the vector with component $|\gamma_\ell|$, $\ell = 1, \dots , N$.
For $n, m \in \mathds{N}_0$, the Kronecker $\delta$-symbol can be
written
\begin{equation*}
\delta_{nm} = \int_0^1 \exp \left[  2 \piup {\rm i}t (n-m)\right] \rd t,
\qquad {\rm i} = \sqrt{-1}.
\end{equation*}
Applying this in (\ref{8}) we get
\begin{eqnarray}
\Xi_N(p, \mu) &  = &   \sum_{\varrho \in \mathds{N}_0^N}
F_N(\varrho, p, \mu)  
\int_{\Gamma_V} \int_{[0,1]^N} \exp\left[ 
2 \piup {\rm i} \sum_{\ell=1}^N \left( \varrho_\ell - |\gamma_\ell|
\right)t_\ell
\right]  \lambda ( \rd \gamma) \rd t_1 \dots \rd t_N \nonumber \\
& = & \sum_{\varrho \in \mathds{N}_0^N} F_N(\varrho, p, \mu)
 \int_{[0,1]^N} \exp\bigg{(} 2 \piup {\rm i}
\sum_{\ell=1}^N \varrho_\ell t_\ell \bigg{)} R_N (t_1 , \dots , t_N)
\rd t_1 \dots \rd t_N. 
\label{10}
\end{eqnarray}
Here,
\begin{eqnarray}
\label{10a} R_N (t_1 , \dots , t_N) & = & \int_{\Gamma_V}
\exp\bigg{(} - 2 \piup {\rm i} \sum_{\ell=1}^N |\gamma_\ell| t_\ell
\bigg{)} \lambda ( \rd \gamma)
 = \int_{\Gamma_V}  \exp\bigg{[} - 2 \piup {\rm
	i} \sum_{\ell=1}^N \sum_{x\in \gamma} \mathds{I}_{\Delta_\ell} (x)
t_\ell \bigg{]}
\lambda ( \rd \gamma) \nonumber\\[.2cm]
& = &\sum_{n=0}^\infty \frac{1}{n!} \int_{V^n} \exp\bigg[  - 2 \piup
{\rm i} \sum_{\ell=1}^N \sum_{j=1}^n \mathds{I}_{\Delta_\ell}(x_j)
t_\ell \bigg] \rd x_1 \dots \rd x_n. 
\end{eqnarray}
In getting the second line of (\ref{10a}), we use (\ref{6a}), and
then the integral with $\lambda$ is written according to~(\ref{2}).
Note that the expression under the integral in the last line of
(\ref{10a}) factors in $j$, which allows for writing it in the form
\begin{eqnarray*}
	R_N (t_1 , \dots , t_N)& = & \sum_{n=0}^\infty \frac{1}{n!}\left\lbrace  
	\int_{V} \exp\left[  - 2 \piup {\rm i} \sum_{\ell=1}^N
	\mathds{I}_{\Delta_\ell}(x) t_\ell \right] 
	\rd x\right\rbrace^n \\[.2cm]
	& = & \sum_{n=0}^\infty \frac{1}{n!} \left[ \sum_{\ell=1}^N
	\int_{\Delta_\ell} \exp\left( - 2 \piup {\rm i} t_\ell \right) \rd x
	\right]^n 
	 =  \exp \left[  \upsilon \sum_{\ell =1}^N \exp\left(  -
	2 \piup {\rm i} t_\ell \right)  \right]  .
\end{eqnarray*}

Now we apply this in (\ref{10}) and obtain
\begin{eqnarray}
\label{11}
\Xi_N (p, \mu)  =  \sum_{\varrho\in \mathds{N}_0^N} F_N(\varrho, p, \mu)
\prod_{\ell=1}^N \left(
\frac{\upsilon^{\varrho_\ell}}{\varrho_\ell!}\right) 
= 
\sum_{\varrho\in \mathds{N}_0^N} \exp\left[  \frac{p}{2N} \left(
\sum_{\ell=1}^N \varrho_\ell\right)^2\right] \prod_{\ell=1}^N \pi
(\varrho_\ell, \mu),
\end{eqnarray}
where $p$ is as in (\ref{7a}) and
\begin{equation}
\label{12}
\pi (n, \mu) = \frac{\upsilon^n}{n!} \exp\left( \mu n- \frac{1}{2}
a p n^2\right), \qquad n \in \mathds{N}_0.
\end{equation}
Note that, for $p=0, \pi$ turns into the (non-normalized) Poisson
distribution with parameter $\upsilon \re^{\,\mu}$. Hence, alternating the
cell size amounts to shifting $\mu $.

\subsection{Single-phase domains}

By a standard identity
\[
\exp\left[  \frac{p}{2N} \left(
\sum_{\ell=1}^N \varrho_\ell\right)^2\right] = \sqrt{\frac{N}{2\piup
		p}} \int_{\mathds{R}}\exp\Bigg{(} - N \frac{y^2}{2p} + y
\sum_{\ell=1}^N \varrho_\ell\Bigg{)} \rd y,
\]
we transform (\ref{11}) into the following expression
\begin{equation}
\label{13}
\Xi_N (p , \mu) = \sqrt{\frac{N}{2\piup
		p}} \int_{\mathds{R}}\exp\left[  N E (y, p , \mu) \right] \rd y,
\end{equation}
where
\begin{equation}
\label{14}
E (y, p , \mu) = - \frac{y^2}{2p} + \ln K (y, p , \mu),
\end{equation}
and, cf. (\ref{7a}) and (\ref{12}),
\begin{equation}
\label{15}
K (y, p , \mu) = \sum_{n=0}^\infty \frac{\upsilon^n}{n!}
\exp\left[ (y+\mu)n - \frac{a p}{2} n^2 \right].
\end{equation}
Note that $E$ is an infinitely differentiable function of all its
arguments. Set
\begin{equation}
\label{15a}
P_N (p , \mu) = \frac{1}{\upsilon N} \ln \Xi_N (p , \mu).
\end{equation}
By the following evident inequality
\[
(y +  \mu) n - \frac{ap}{2}n^2 \leqslant \frac{(y +  \mu)^2}{2 ap} \,,
\qquad n\in \mathds{N}_0\,,
\]
we obtain from (\ref{15}) and  (\ref{14}) that
\begin{equation}
\label{16}
E (y, p, \mu) \leqslant - \frac{a-1}{2ap} y^2 + \frac{ \mu}{2ap} (2y +
\mu) + \upsilon.
\end{equation}
By virtue of Laplace's method \cite{Fed}, to calculate the large $N$
limit in (\ref{15a})  we should  find the global maxima of $E (y, p,
\mu)$ as a function of $y\in \mathds{R}$.
\begin{remark}
	\label{1rk}
	From the estimate in (\ref{16}) it follows that: (a) the integral in
	(\ref{13}) is convergent for all $p
	>0$ and $\mu \in \mathds{R}$ since $a>1$, see (\ref{5}) and
	(\ref{7a});  (b) for fixed $p$ and $\mu$,  as the bounded from the above
	function $E (y, p , \mu)$ has global maxima, each of which is also
	its local maximum.
\end{remark}
To get (b) we observe that (\ref{16}) implies $\lim_{|y|\to
	+\infty}E(y, p, \mu) = -\infty$; hence, each point $\bar{y}$ of
global maximum belongs to a certain interval $(\bar{y}-\varepsilon,
\bar{y}+\varepsilon)$, where it is also a maximum point. Since $E$
is everywhere differentiable in $y$, then $\bar{y}$ is the point of
global maximum only if it solves the following equation
\begin{equation}
\label{16b}
E_1 (y, p, \mu) := \frac{\partial}{\partial y} E(y, p, \mu) =0.
\end{equation}
By (\ref{14}) and (\ref{15}) this equation can be rewritten in the
form
\begin{eqnarray}
\label{16c}
&& -\frac{y}{p} + \frac{K_1 (y, p, \mu)}{K (y, p, \mu)} = 0, \\
&& K_1 (y, p, \mu):= \sum_{n=1}^\infty \frac{n\upsilon^n}{n!}
\exp\left[(y+\mu)n - \frac{ap}{2} n^2 \right]. \nonumber
\end{eqnarray}
\begin{remark}
	\label{RSrkk}
	As we will see from the proof of theorem \ref{4pn} below, the
	equation in (\ref{16c}) has at least one solution for all $p>0$ and
	$\mu \in \mathds{R}$. Since both $K_1$ and $K$ take only strictly
	positive values, these solutions are also strictly positive.
\end{remark}
\begin{definition}
	\label{1df}
	We say that  $(p, \mu)$ belongs to a single-phase domain if $E (y, p
	, \mu)$ has a unique global maximum $\bar{y}\in\mathds{R}$ such that
	\begin{equation}
	\label{16a}
	E_2 (\bar{y}, p , \mu):=\frac{\partial^2}{\partial y^2} E (y, p ,
	\mu)\Big|_{y=\bar{y}} < 0.
	\end{equation}
\end{definition}
Note that $\bar{y}$ can be a point of maximum if $E_1 (\bar{y}, p ,
\mu)= E_2 (\bar{y}, p , \mu)=0$. That is, not every point of global
maximum corresponds to a point in a single-phase domain.

The condition in (\ref{16c})  determines the unique probability
measure $Q_{p,\mu}$ on $\mathds{N}_0$ such that
\begin{equation}
\label{19}
Q_{p,\mu} (n) = \frac{1}{K(\bar{y}, p , \mu)n!} \upsilon^n
\exp\left[ (\bar{y}+ \mu)n -\frac{ap}{2}n^2 \right], \qquad n \in
\mathds{N}_0\,,
\end{equation}
which yields the probability law of the occupation number of a
single cell. Then, the unique thermodynamic phase of the model
corresponding to $(p, \mu)\in \mathcal{R}$ is the product
\begin{equation}
\label{19a}
\mathbf{Q}_{p, \mu} = \bigotimes_{\ell=1}^\infty Q^{(\ell)}_{p, \mu}
\end{equation}
of the copies of the measure defined in (\ref{19}). It is a
probability measure on the space of all vectors $\mathbf{n} =
(n_\ell)_{\ell=1}^\infty$, in which $n_\ell\in \mathds{N}_0$ is the
occupation number of $\ell$-th cell.

The role of the condition in (\ref{16a}) is to yield the possibility
to apply Laplace's method for asymptotic calculation of the integral in
(\ref{13}). By direct calculations, it follows that
\begin{eqnarray}
\label{17}
& & E_2 (y, p , \mu) = - \frac{1}{p}  + \frac{1}{2\left[K (y,
	p , \mu)\right]^2}\nonumber\\[.2cm]  & & \quad \times \sum_{n_1, n_2 =0}^\infty
\frac{\upsilon^{n_1+n_2}}{n_1! n_2!} (n_1 - n_2)^2 \exp\left[  (y+
\mu)(n_1+n_2) - \frac{ap}{2}( n_1^2 + n_2^2) \right].
\end{eqnarray}
In dealing with the equation in (\ref{16c}) we  fix $p>0$ and
consider $E_1$ as a function of $y\in \mathds{R}$ and $\mu\in
\mathds{R}$. Then, for a given $\mu_0$, we solve (\ref{16c}) to find
$\bar{y}_0$ and then check whether it is the unique point of global
maximum and (\ref{16a}) is satisfied, i.e., whether $(p,\mu_0)$
belongs to a single-phase domain. Then, we slightly vary $\mu$ and
repeat the same. This will yield a function $\mu \mapsto
\bar{y}(\mu)$ defined in the neighbourhood of $\mu_0$, which dependends
on the choice of $p$ and satisfies $\bar{y}(\mu_0) =\bar{y}_0$. In
doing so, we use the analytic implicit function theorem based
on the fact that, for each fixed $p>0$, the function
$\mathds{R}^2\ni (y, \mu) \mapsto E_1(y,p,\mu)$ can be analytically
continued to some complex neighbourhood of $\mathds{R}^2$, see
(\ref{15}) and (\ref{16c}). For the reader's convenience, we present
this theorem here in the form adapted from 
\cite[section 7.6, page 34]{CC}.
 For some $p_0>0$, let $\mathcal{B}\subset \mathds{C}^2$ be
a connected open set containing $\mathds{R}^2$ such that the
function $(y, \mu) \mapsto E_1(y,p_0,\mu)$ should be analytic in
$\mathcal{B}$.
\begin{proposition}[Implicit function theorem]
	\label{0pn}
	Let $p_0$ and $(y_0,\mu_0)$  be such that $E_1 (y_0,p_0,\mu_0) =0$
	and $E_2 (y_0,p_0,\mu_0) \neq 0$. Let also $\mathcal{B}\subset
	\mathds{C}^2$ be as just described. Then, there exist open sets
	$\mathcal{D}_1\subset \mathds{C}$ and $\mathcal{D}_2\subset
	\mathds{C}$ such that $y_0 \in \mathcal{D}_1$, $ \mu_0 \in
	\mathcal{D}_2$, $\mathcal{D}_1\times \mathcal{D}_2 \subset
	\mathcal{B}$, and an analytic function $\bar{y}: \mathcal{D}_2 \to
	\mathcal{D}_1$, for which the following holds
	\begin{equation*}
	\{ (y,\mu)\in \mathcal{D}_1 \times \mathcal{D}_2 : E_1 (y,p_0,\mu) =
	0\} = \{ (\bar{y}(\mu),  \mu): \mu\in \mathcal{D}_2\}.
	\end{equation*}
	The derivative of $\bar{y}$ in $\mathcal{D}_2$ is
	\begin{eqnarray}
	\label{SR1}
	\frac{\rd \bar{y} (\mu)}{\rd \mu} =  - \frac{1}{E_2 (\bar{y}(\mu), p_0,
		\mu)} \left[\frac{\partial }{\partial \mu} E_1 (y, p_0, \mu)\right]_{y=\bar{y}(\mu)}.
	\end{eqnarray}
\end{proposition}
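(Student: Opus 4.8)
The plan is to obtain Proposition \ref{0pn} as a direct application of the classical analytic implicit function theorem, in the form recorded in \cite[Section 7.6]{CC}, followed by an elementary implicit differentiation for the formula (\ref{SR1}).

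First I would record that the analyticity underlying the whole argument is genuinely available: by (\ref{15}) and (\ref{16c}), both $K$ and $K_1$ are obtained by composing the entire one-variable functions $z\mapsto\sum_{n\ge 0}\frac{1}{n!}e^{-apn^2/2}z^n$ and $z\mapsto\sum_{n\ge 1}\frac{1}{(n-1)!}e^{-apn^2/2}z^n$ (both with infinite radius of convergence, thanks to the Gaussian factor $e^{-apn^2/2}$) with the entire map $(y,\mu)\mapsto\upsilon e^{y+\mu}$ on $\mathds{C}^2$; since $K>0$ on $\mathds{R}^2$, the quotient $E_1=-y/p+K_1/K$ is analytic on some connected open $\mathcal{B}\subset\mathds{C}^2$ containing $\mathds{R}^2$, and so is its partial derivative $\partial E_1/\partial y=E_2$. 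This is exactly the standing hypothesis on $\mathcal{B}$ in the statement.

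Next I would apply the analytic implicit function theorem to $E_1$ at $(y_0,\mu_0)$: its hypotheses are precisely $E_1(y_0,p_0,\mu_0)=0$ and $(\partial E_1/\partial y)(y_0,p_0,\mu_0)=E_2(y_0,p_0,\mu_0)\neq 0$, both of which are assumed. It yields open sets $\mathcal{D}_1\ni y_0$ and $\mathcal{D}_2\ni\mu_0$ with $\mathcal{D}_1\times\mathcal{D}_2\subset\mathcal{B}$ and an analytic map $\bar{y}:\mathcal{D}_2\to\mathcal{D}_1$ with $\bar{y}(\mu_0)=y_0$ and $E_1(\bar{y}(\mu),p_0,\mu)=0$ on $\mathcal{D}_2$. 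To secure the stated \emph{equality} of sets (not merely the inclusion of the graph), I would, if this is not already built into the cited statement, first shrink $\mathcal{D}_1$ to a disc on which $E_2=\partial E_1/\partial y$ is nowhere zero — possible since $E_2$ is continuous and $E_2(y_0,p_0,\mu_0)\neq 0$; then, for each fixed $\mu\in\mathcal{D}_2$, the holomorphic function $y\mapsto E_1(y,p_0,\mu)$ has nonvanishing derivative on $\mathcal{D}_1$, hence is injective there, so it has at most one zero in $\mathcal{D}_1$, necessarily $\bar{y}(\mu)$.

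Finally, for (\ref{SR1}) I would differentiate the identity $E_1(\bar{y}(\mu),p_0,\mu)\equiv 0$ on $\mathcal{D}_2$; analyticity of $\bar{y}$ and the chain rule give
\[
E_2\bigl(\bar{y}(\mu),p_0,\mu\bigr)\,\frac{d\bar{y}(\mu)}{d\mu}+\Bigl[\tfrac{\partial}{\partial\mu}E_1(y,p_0,\mu)\Bigr]_{y=\bar{y}(\mu)}=0,
\]
and dividing by $E_2(\bar{y}(\mu),p_0,\mu)\neq 0$ (guaranteed on $\mathcal{D}_2$ by the shrinking above) produces exactly (\ref{SR1}). The only step needing any care beyond invoking \cite{CC} is the uniqueness assertion, handled by the local-injectivity argument for the single-variable slice after a harmless shrinking of $\mathcal{D}_1$; everything else is routine.
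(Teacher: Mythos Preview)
The paper does not give its own proof of Proposition \ref{0pn}; it is presented as a direct quotation of the analytic implicit function theorem from \cite[Section 7.6, page 34]{CC}, so there is nothing to compare against beyond noting that your proposal is exactly the intended application of that cited result, with the analyticity of $E_1$ and the derivative formula (\ref{SR1}) made explicit. Your argument is correct. One small caution: the step ``nonvanishing derivative on a disc $\Rightarrow$ injective there'' is not valid for holomorphic functions in general (think of $e^z$ on a disc of radius larger than $\pi$), but the uniqueness clause is already part of the implicit function theorem as formulated in \cite{CC}, so your hedge ``if this is not already built into the cited statement'' is well placed and nothing is lost.
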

\begin{remark}
	\label{Luftrk}
	In the sequel, we  also use the version of the implicit function
	theorem in which we do not employ the analytic continuation of $E_1$
	to complex values of $p$. Let the conditions of proposition~\ref{0pn} regarding $E_1$ and $E_2$ be satisfied. Then, there exist
	open sets $\mathcal{D}_i \subset \mathds{R}$, $i=1,2,3$, and a
	continuous function $\bar{y}:\mathcal{D}_3 \times \mathcal{D}_2 \to
	\mathcal{D}_1$ such that $p_0 \in \mathcal{D}_3$, $\mu_0 \in
	\mathcal{D}_2$ and the following holds
	\begin{equation*}
	\{ (y,p,\mu)\in \mathcal{D}_1 \times \mathcal{D}_2
	\times\mathcal{D}_3 : E_1 (y,p_0,\mu) = 0\} = \{ (\bar{y}(p,\mu), p,
	\mu): p\in \mathcal{D}_3, \ \mu\in \mathcal{D}_2\}.
	\end{equation*}
\end{remark}

The partial derivative of $\bar{y} (p, \mu)$ over $\mu\in
\mathcal{D}_2$ is given by the right-hand side of (\ref{SR1}). In the
sequel, by writing $\bar{y} (\mu)$ we assume both the function as
proposition \ref{0pn}, defined for a fixed $p$ known from the
context, and that as in remark \ref{Luftrk} with the fixed value of
$p$.

For a fixed $p_0>0$, assume that $(p_0, \mu_0)$ belongs to a
single-phase domain. By proposition \ref{0pn} there exists
$\varepsilon >0$ such that the function $(\mu_0 - \varepsilon, \mu_0
+ \varepsilon)\ni \mu \mapsto \bar{y}(\mu)$ can be defined by the
equation $E_1 (y, p_0, \mu)=0$. Its continuation from the mentioned
interval is related to the fulfilment of the condition $E_2
(\bar{y}(\mu), p_0 , \mu) <0$, cf. (\ref{16a}), which may not be the
case. At the same time, by (\ref{17}) we have that
\begin{equation*}
\frac{\partial }{\partial \mu} E_1 (y, p, \mu)= E_2 (y, p, \mu) +
\frac{1}{p}
>0,
\end{equation*}
holding for all $y\in\mathds{R}$, $p>0$ and $\mu \in\mathds{R}$. In
view of this and  (\ref{SR1}), it might be more convenient to use
the inverse function $y\mapsto \bar{\mu}(y)$ since the
$\mu$-derivative of $E_1$ is always nonzero. Its properties are
described by the following statement obtained from the analytic
implicit function theorem mentioned above. Recall that only positive
$y$ solves the equation in (\ref{16c}).
\begin{proposition}
	\label{01pn}
	Given $p_0$, let $\mathcal{B}$ be as in proposition \ref{0pn}. Then,
	there exist open connected subsets $\mathcal{D}_i\subset
	\mathds{C}$, $i=1,2$, and an analytic function $\mathcal{D}_1\ni y
	\mapsto \bar{\mu}(y)\in \mathcal{D}_2$ such that $\mathcal{D}_1$
	contains $\mathds{R}_{+}$, $\mathcal{D}_1 \times \mathcal{D}_2
	\subset \mathcal{B}$, and the following holds
	\begin{equation*}
	\{ (y,\mu)\in \mathcal{D}_1 \times \mathcal{D}_2 : E_1 (y,p_0,\mu) =
	0\} = \{ (y,  \bar{\mu}(y)): y\in \mathcal{D}_1\}.
	\end{equation*}
	The derivative of $\bar{\mu}$ in $\mathcal{D}_1$ is
	\begin{eqnarray*}
		\frac{\rd \bar{\mu} (y)}{\rd y}  =  - \frac{E_2 (y, p_0,
			\bar{ \mu}(y))}{E_2 (y, p_0, \bar{ \mu}(y)) + \frac{1}{p}}.
	\end{eqnarray*}
\end{proposition}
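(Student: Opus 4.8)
The plan is to invert the equation $E_1(y,p_0,\mu)=0$ in the variable $\mu$, exploiting that $E_1$ is strictly increasing in $\mu$. First I would record the monotonicity: by (\ref{17}), for all real $y$ and $\mu$,
\[
\frac{\partial}{\partial\mu}E_1(y,p_0,\mu)=E_2(y,p_0,\mu)+\frac{1}{p_0}>0,
\]
the right-hand side being manifestly positive, since it is the $(n_1-n_2)^2$-sum appearing in (\ref{17}) divided by $2K^2$. Writing $E_1=-y/p_0+K_1/K$ as in (\ref{16c}) and noting that $K_1/K$ is the mean of $n$ under the normalised weights $\upsilon^n n!^{-1}\exp\bigl((y+\mu)n-\tfrac{a p_0}{2}n^2\bigr)$, one checks that $K_1/K\to0$ as $\mu\to-\infty$ and $K_1/K\to+\infty$ as $\mu\to+\infty$: the first limit is immediate because each $w_n/w_0\to0$, and the second follows from a short argument (e.g.\ Markov's inequality applied to those weights shows the mean occupation cannot stay bounded as $\mu\to+\infty$). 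Consequently, for each fixed $y>0$ the continuous increasing function $\mu\mapsto E_1(y,p_0,\mu)$ runs from $-y/p_0<0$ to $+\infty$, so it has a unique real zero, which I denote $\bar{\mu}(y)$; for $y\le0$ the function is everywhere positive, consistently with Remark \ref{RSrkk}.

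Next I would feed each point $(y,\bar{\mu}(y))$, $y>0$, into the analytic implicit function theorem of Proposition \ref{0pn} with the roles of $y$ and $\mu$ interchanged; this is licit because there $E_1=0$ while $\partial_\mu E_1\neq0$ by the display above. It furnishes, near every such point, a complex bidisc contained in $\mathcal{B}$ and a local analytic function whose graph is exactly the zero set of $E_1$ in that bidisc. Since the real zero $\bar{\mu}(y)$ is unique for every $y>0$, these local solutions agree on overlaps and glue to a single real-analytic function $\bar{\mu}$ on $(0,+\infty)$, which therefore extends analytically to a connected complex neighbourhood. I would then take $\mathcal{D}_1$ to be a connected open complex neighbourhood of $\mathds{R}_{+}$ on which this extension is analytic, and $\mathcal{D}_2$ a connected open complex neighbourhood of $\bar{\mu}\bigl((0,+\infty)\bigr)\subset\mathds{R}$, both chosen thin enough that $\mathcal{D}_1\times\mathcal{D}_2\subset\mathcal{B}$ and that — using once more that near a point with $\partial_\mu E_1\neq0$ the zero set of $E_1$ is precisely a graph — the only zeros of $E_1$ inside $\mathcal{D}_1\times\mathcal{D}_2$ lie on the graph of $\bar{\mu}$. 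This yields the asserted level-set identity.

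Finally, differentiating the identity $E_1(y,p_0,\bar{\mu}(y))\equiv0$ in $y$ gives $\partial_yE_1+(\partial_\mu E_1)\,\bar{\mu}'(y)=0$; since $\partial_yE_1=E_2$ and $\partial_\mu E_1=E_2+1/p$ at $(y,p_0,\bar{\mu}(y))$, solving for $\bar{\mu}'(y)$ produces $\bar{\mu}'(y)=-E_2(y,p_0,\bar{\mu}(y))/\bigl(E_2(y,p_0,\bar{\mu}(y))+1/p\bigr)$, which is the stated formula. I expect the only genuine work to sit in the middle of the first step — pinning down the two limits of $K_1/K$ in $\mu$ so as to obtain existence and uniqueness of $\bar{\mu}(y)$ for every $y>0$ — together with the bookkeeping in the second step, namely verifying that the globally glued analytic branch, paired with sufficiently thin tube-neighbourhoods of the half-line and of its image, exhausts the \emph{entire} zero set of $E_1$ there, with no spurious complex branch intruding.
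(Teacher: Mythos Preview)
Your proposal is correct and follows the same route the paper indicates: apply the analytic implicit function theorem (Proposition~\ref{0pn}) with the roles of $y$ and $\mu$ swapped, using the key fact $\partial_\mu E_1 = E_2 + 1/p_0 > 0$ established just before the statement. The paper in fact gives no explicit proof of Proposition~\ref{01pn} beyond the remark that it is ``obtained from the analytic implicit function theorem mentioned above''; your write-up simply supplies the details that the paper leaves implicit --- in particular the global real existence/uniqueness of $\bar\mu(y)$ for each $y>0$ and the gluing of the local analytic branches into a single function on a tube about $\mathds{R}_+$.
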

\begin{proposition}
	\label{1pn}
	Each single-phase domain, $\mathcal{R}$, has the following
	properties: (a) it is an open subset of $\mathds{R}_{+} \times
	\mathds{R}$; (b) for each $(p_0, \mu_0)\in \mathcal{R}$, the
	function $\mathcal{I}_{p_0}:=\{ \mu \in \mathds{R}: (p_0, \mu )\in
	\mathcal{R}\}\ni \mu \mapsto \bar{y}(\mu)$ as in proposition
	\ref{0pn} is continuously differentiable on $\mathcal{I}_{p_0}$.
	Moreover,
	\begin{equation}
	\label{Luft}
	\frac{\rd\bar{y}(\mu)}{\rd \mu}  >0 , \qquad {\textit for} \ {\textit all} \
	\mu\in \mathcal{I}_{p_0}.
	\end{equation}
\end{proposition}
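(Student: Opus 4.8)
The plan is to establish the three assertions in turn, relying on the analytic implicit function theorem (Proposition~\ref{0pn}) and its real version (Remark~\ref{Luftrk}).

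\emph{Openness.} Let $(p_0,\mu_0)\in\mathcal{R}$, so $E$ has a unique global maximum $\bar y_0$ with $E_2(\bar y_0,p_0,\mu_0)<0$. First I would apply Proposition~\ref{0pn} (in the version of Remark~\ref{Luftrk}, allowing $p$ to vary) to obtain a neighbourhood $\mathcal{D}_3\times\mathcal{D}_2$ of $(p_0,\mu_0)$ and a continuous function $(p,\mu)\mapsto\bar y(p,\mu)$ solving $E_1=0$ with $\bar y(p_0,\mu_0)=\bar y_0$; by continuity of $E_2$ we may shrink the neighbourhood so that $E_2(\bar y(p,\mu),p,\mu)<0$ there. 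The remaining point is to show $\bar y(p,\mu)$ is still the \emph{unique global} maximum near $(p_0,\mu_0)$, not merely a local one. For this I would argue by contradiction using Remark~\ref{1rk}: if there were a sequence $(p_k,\mu_k)\to(p_0,\mu_0)$ with a competing global maximizer $y_k$ distinct from $\bar y(p_k,\mu_k)$, then the uniform bound (\ref{16}) confines the $y_k$ to a compact set, so along a subsequence $y_k\to y_\ast$; passing to the limit in $E(y_k,p_k,\mu_k)\ge E(\bar y(p_k,\mu_k),p_k,\mu_k)$ and using continuity of $E$ gives that $y_\ast$ is also a global maximizer of $E(\cdot,p_0,\mu_0)$, hence $y_\ast=\bar y_0$; but then $y_k$ eventually lies in the neighbourhood $\mathcal{D}_1$ where the solution set of $E_1=0$ is the graph of $\bar y$, forcing $y_k=\bar y(p_k,\mu_k)$, a contradiction. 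This yields openness.

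\emph{Differentiability and the sign of the derivative.} Fix $(p_0,\mu_0)\in\mathcal{R}$. For each $\mu\in\mathcal{I}_{p_0}$ the pair $(\bar y(\mu),\mu)$ satisfies $E_1=0$ and, by Definition~\ref{1df}, $E_2(\bar y(\mu),p_0,\mu)<0\neq 0$, so Proposition~\ref{0pn} applies at every such point; the resulting local analytic solution must agree with $\mu\mapsto\bar y(\mu)$ by the uniqueness just proved, hence $\bar y$ is (real-)analytic, in particular continuously differentiable, on the open set $\mathcal{I}_{p_0}$. The derivative formula (\ref{SR1}) gives
\begin{equation*}
\frac{d\bar y(\mu)}{d\mu}=-\frac{1}{E_2(\bar y(\mu),p_0,\mu)}\left[\frac{\partial}{\partial\mu}E_1(y,p_0,\mu)\right]_{y=\bar y(\mu)}.
\end{equation*}
By the identity recorded just before Proposition~\ref{01pn}, $\partial E_1/\partial\mu=E_2+1/p>0$ everywhere, while $E_2(\bar y(\mu),p_0,\mu)<0$ on the single-phase domain; therefore the right-hand side is strictly positive, which is (\ref{Luft}).

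The main obstacle is the first part: upgrading ``local maximum that persists'' to ``unique global maximum that persists.'' The implicit function theorem only controls the critical point locally, so the compactness argument from the coercivity estimate (\ref{16}) together with the uniqueness of the global maximizer at the base point is essential to rule out a far-away competitor suddenly overtaking $\bar y$ under a small perturbation of $(p,\mu)$. Once that is in place, parts (b) and the inequality (\ref{Luft}) are immediate from (\ref{SR1}) and the sign of $\partial E_1/\partial\mu$.
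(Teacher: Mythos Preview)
Your argument follows the same overall line as the paper's: invoke the implicit function theorem (Remark~\ref{Luftrk}) to obtain a continuous branch $\bar y(p,\mu)$ near $(p_0,\mu_0)$, use continuity of $E_2$ to propagate $E_2<0$ to a neighbourhood, and then read off (\ref{Luft}) from (\ref{SR1}) together with the identity $\partial_\mu E_1=E_2+1/p>0$. The one substantive difference is that you address a point the paper's proof passes over in silence: Definition~\ref{1df} requires $\bar y$ to be the \emph{unique global} maximizer, not merely a nondegenerate critical point, and the paper concludes openness of $\mathcal{R}$ directly from $E_2<0$ without checking that this global uniqueness persists under perturbation. Your compactness argument---using the coercivity estimate (\ref{16}) to trap any competing maximizer in a bounded set, then the local uniqueness clause of the implicit function theorem to force a contradiction---is exactly what is needed to close this, and it is correct. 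So your proof is in fact more complete than the paper's on part~(a); on part~(b) the two coincide.
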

\begin{proof}
	For a single-phase domain, $\mathcal{R}$, take $(p_0, \mu_0)\in
	\mathcal{R}$. By remark \ref{Luftrk} the function $(p,\mu) \mapsto
	\bar{y}(p,\mu)$, defined by the equation $E_1(y,p,\mu)=0$ is
	continuous in some open subset of $\mathcal{R}$ containing $(p_0,
	\mu_0)$. By the continuity of $E_2 (\bar{y}(p,\mu), p, \mu)$ and the
	fact that $E_2 (\bar{y}(p_0,\mu_0), p_0, \mu_0)<0$ (since $(p_0,
	\mu_0)\in \mathcal{R}$) we get that $E_2 (\bar{y}(p,\mu), p, \mu)<0$
	for $(p,\mu)$ in some open neighbourhood of $(p_0, \mu_0)$. Hence,
	$\mathcal{R}$ contains $(p_0, \mu_0)$ with some neighbourhood and
	thus is open. The continuous differentiability of $\bar{y}$ and the
	sign rule in (\ref{Luft}) follow by proposition \ref{0pn} and
	(\ref{SR1}), respectively.
\end{proof}

\vspace{-3mm}
By (\ref{19}) and  (\ref{16c}) we get the $Q_{p,\mu}$-mean value
$\bar{n} = \bar{n}(p,\mu)$ of the occupation number of a given cell
in the form
\begin{equation}
\label{oct}
\bar{n}(p, \mu) = \sum_{n=0}^\infty n Q_{p,\mu}(n) =
\frac{K_1(\bar{y}(p,\mu),p,\mu)}{K(\bar{y}(p,\mu),p,\mu)}
=  \frac{\bar{y}(p , \mu)}{p}.
\end{equation}
Note that, up to the factor $\upsilon^{-1}$, $\bar{n}(p, \mu)$ is
the particle density in phase $\mathbf{Q}_{p, \mu}$. For a fixed
$p$, by proposition~\ref{1pn} $\bar{n}(p, \cdot)$ in an increasing
function on $\mathcal{I}_p$, which  can be inverted to give
$\bar{\mu}(p, \bar{n})$. By Laplace's method we then get the
following corollary of proposition \ref{1pn}.
\begin{proposition}
	\label{2pn}
	Let $\mathcal{R}$ be a single-phase domain. Then, for each $(p ,
	\mu) \in \mathcal{R}$, the limiting pressure $P (p , \mu) =
	\lim_{N\to +\infty} P_N(p , \mu)$, see (\ref{15a}), exists and is
	continuously differentiable on $\mathcal{R}$. Moreover, it is given
	by the following formula
	\begin{equation}
	\label{21}
	P(p, \mu) = \upsilon^{-1} E(\bar{y}(p , \mu), p ,
	\mu) .
	\end{equation}
\end{proposition}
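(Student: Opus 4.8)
The plan is to establish the limit formula (\ref{21}) pointwise by a rigorous Laplace-method estimate of the integral in (\ref{13}), and then to obtain the $C^1$-regularity of $P$ on $\mathcal{R}$ from the smoothness of $E$ together with the regularity of the branch $(p,\mu)\mapsto\bar y(p,\mu)$ already recorded in Propositions \ref{0pn}--\ref{1pn}.

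Fix $(p,\mu)\in\mathcal{R}$ and write $\bar y=\bar y(p,\mu)$ for the unique global maximiser of $y\mapsto E(y,p,\mu)$, which by Definition \ref{1df} satisfies $E_2(\bar y,p,\mu)<0$. From (\ref{13}) and (\ref{15a}),
\[
\upsilon N\, P_N(p,\mu)=\ln\sqrt{\tfrac{N}{2\pi p}}+\ln\int_{\mathds{R}}\exp\!\big(N E(y,p,\mu)\big)\,dy,
\]
and since $N^{-1}\ln\sqrt{N/2\pi p}\to 0$, it remains to show $N^{-1}\ln\int_{\mathds{R}}\exp(N E(y,p,\mu))\,dy\to E(\bar y,p,\mu)$. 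For the upper bound I would split the integral over $\{|y-\bar y|\le\delta\}$, where $E(y,p,\mu)\le E(\bar y,p,\mu)$, and its complement, where uniqueness of $\bar y$ together with $\lim_{|y|\to\infty}E(y,p,\mu)=-\infty$ (Remark \ref{1rk}) yields some $\eta(\delta)>0$ with $\sup_{|y-\bar y|\ge\delta}E(y,p,\mu)=E(\bar y,p,\mu)-\eta(\delta)$; then bounding $\int_{|y-\bar y|\ge\delta}\exp(NE)\,dy\le\exp\!\big((N-1)(E(\bar y,p,\mu)-\eta(\delta))\big)\int_{\mathds{R}}\exp(E(y,p,\mu))\,dy$, with the last integral finite by Remark \ref{1rk}(a), and letting $N\to\infty$ gives $\limsup_N N^{-1}\ln(\cdots)\le E(\bar y,p,\mu)$. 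For the lower bound I would use that $E$ is $C^\infty$ with $E_2(\bar y,p,\mu)<0$, so Taylor's formula provides $\delta_0>0$ with $E(y,p,\mu)\ge E(\bar y,p,\mu)-|E_2(\bar y,p,\mu)|(y-\bar y)^2$ on $\{|y-\bar y|\le\delta_0\}$; then $\int_{\mathds{R}}\exp(NE)\,dy\ge\exp(NE(\bar y,p,\mu))\int_{|y-\bar y|\le\delta_0}\exp(-N|E_2(\bar y,p,\mu)|(y-\bar y)^2)\,dy\ge c\,N^{-1/2}\exp(NE(\bar y,p,\mu))$ for large $N$, whence $\liminf_N N^{-1}\ln(\cdots)\ge E(\bar y,p,\mu)$. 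Together these prove $P(p,\mu)=\upsilon^{-1}E(\bar y(p,\mu),p,\mu)$.

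For the regularity, I would note that $E$ is $C^\infty$ in all of its arguments and that on $\mathcal{R}$ one has $E_1(\bar y(p,\mu),p,\mu)=0$ with $E_2(\bar y(p,\mu),p,\mu)\ne 0$; hence, applying the implicit function theorem jointly in $(p,\mu)$ (cf. Remark \ref{Luftrk} and Proposition \ref{0pn}), the branch $(p,\mu)\mapsto\bar y(p,\mu)$ is $C^\infty$, in particular $C^1$, on the open set $\mathcal{R}$ (Proposition \ref{1pn}(a)). Therefore $P(p,\mu)=\upsilon^{-1}E(\bar y(p,\mu),p,\mu)$ is $C^1$ on $\mathcal{R}$ as a composition of $C^1$ maps; moreover, since $E_1(\bar y,p,\mu)=0$, the chain rule reduces to the envelope-type identities $\partial_\mu P=\upsilon^{-1}\partial_\mu E(\bar y,p,\mu)$ and $\partial_p P=\upsilon^{-1}\partial_p E(\bar y,p,\mu)$.

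The main obstacle I anticipate is making the two Laplace estimates genuinely airtight, and in particular checking that the single-phase hypothesis is exactly what drives them: uniqueness of the global maximiser produces the strictly positive gap $\eta(\delta)$ needed for the upper bound, while $E_2(\bar y,p,\mu)<0$ produces the Gaussian comparison needed for the lower bound. A minor technical point is to keep the tail contribution negligible, which is handled by the convergence of $\int_{\mathds{R}}\exp(E(y,p,\mu))\,dy$ coming from the quadratic bound (\ref{16}); once that is in hand the remaining steps are routine.
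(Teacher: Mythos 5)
Your argument is correct and is exactly the route the paper takes: the paper states this result as a one-line corollary "by Laplace's method," and your two-sided estimate (upper bound from the strict gap $\eta(\delta)$ supplied by uniqueness of the global maximiser plus integrability from (\ref{16}), lower bound from the Gaussian comparison supplied by $E_2(\bar y,p,\mu)<0$) together with the implicit-function-theorem regularity of $\bar y(p,\mu)$ is precisely the content being invoked. The only cosmetic remark is that Remark \ref{Luftrk} as stated only records continuity of $(p,\mu)\mapsto\bar y(p,\mu)$, so for the $C^1$ claim you are right to appeal, as you do, to the smooth version of the implicit function theorem using $E\in C^\infty$ and $E_2(\bar y,p,\mu)\neq 0$.
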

Let $\mathcal{N}_p$ be the image of $\mathcal{I}_p$ under the map
$\mu\mapsto \bar{n}(p, \mu)$. Then, the inverse map $\bar{n}\mapsto
\bar{\mu}(p , \bar{n})$ is continuously differential and increasing
on $\mathcal{N}_p$. By means of this map, for a fixed $p$, the
pressure given in (\ref{21}) can be written as a function of
$\bar{n}$
\begin{equation}
\label{22}
P = \bar{P}(\bar{n}) = \upsilon^{-1} E(p\bar{n}, p ,
\bar{ \mu}(p , \bar{n})), \qquad \bar{n}\in \mathcal{N}_p\,,
\end{equation}
which is the equation of state.

\subsection{The phase transition} \label{sec2.4}

Recall that the notion of the single-phase domain was introduced in
definition \ref{1df}, and each  domain of this kind is an open subset of
the open right half-plane $\{(p,\mu): p>0, \ \mu \in \mathds{R}\}$,
see proposition \ref{1pn}. With this regard we have the following
possibilities: (i) the whole half-plane $\{(p,\mu): p>0, \ \mu \in
\mathds{R}\}$ is such a domain; (ii) there exist more than one
single-phase domain. In case (i),  for all $(p,\mu)$ there exists
one phase (\ref{19a}). In the context of this work, a phase
transition is understood as the possibility of having different
phases at the same value of the pair $(p,\mu)$. If this is the case,
$(p,\mu)$ is called a \emph{phase coexistence point}. Clearly, such
a point should belong to the common topological boundary of at least
two distinct single-phase domains. That is, to prove the existence
of phase transitions we have to show that possibility (ii) takes
place and that these single-phase domains have a common boundary. We
do this in theorems~\ref{4pn} and \ref{1tm} below.

Let $\mathcal{R}$ be a single-phase domain. Take $(p_0,\mu_0)\in
\mathcal{R}$ and consider the line $l_{p_0}=\{(p_0,\mu): \mu\in
\mathds{R}\}$. If the  whole line lies in $\mathcal{R}$, by
proposition \ref{1pn}, $\bar{y}(\mu)$ is a continuously
differentiable and increasing function of $\mu \in \mathds{R}$. In
our first theorem, we prove that this is the case for small enough
$p_0$.
\begin{theorem}
	\label{4pn}
	There exists $p_0>0$ such that the set $\mathcal{R}(p_0):=\{ (p,
	\mu): p\in(0,p_0]\}$ is a single-phase domain.
\end{theorem}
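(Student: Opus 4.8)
The plan is to choose $p_{0}=p_{0}(a,\upsilon)>0$ so small that, for every $p\in(0,p_{0}]$ and every $\mu\in\mathds{R}$, the function $y\mapsto E(y,p,\mu)$ is \emph{strictly concave} on $\mathds{R}$. Granting this: by Remark \ref{1rk}(b) the function $E(\cdot,p,\mu)$ attains a global maximum at a finite point; strict concavity makes this its unique critical point, hence its unique point of global maximum $\bar y$, and forces $E_{2}(\bar y,p,\mu)<0$. Thus $(p,\mu)$ satisfies the condition of Definition \ref{1df}, and as the set $\{(p,\mu):p\in(0,p_{0}]\}$ is connected it is then a single-phase domain. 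So everything reduces to proving
\begin{equation*}
E_{2}(y,p,\mu)<0\qquad\text{for all }(y,\mu)\in\mathds{R}^{2},\ p\in(0,p_{0}].
\end{equation*}

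Write $c_{n}:=\upsilon^{n}e^{sn-\frac{ap}{2}n^{2}}/n!$ with $s:=y+\mu$. By (\ref{17}),
\[
E_{2}(y,p,\mu)=-\frac{1}{p}+v(s),\qquad
v(s):=\frac{1}{2\,K(y,p,\mu)^{2}}\sum_{n_{1},n_{2}\ge0}c_{n_{1}}c_{n_{2}}(n_{1}-n_{2})^{2}=\partial_{s}^{2}\ln K(y,p,\mu),
\]
i.e.\ $v(s)$ is the variance of the occupation number of one cell under the probability measure with weights $\propto(c_{n})$, and it depends on $(y,\mu)$ only through $s$. Writing $g(s):=K_{1}(y,p,\mu)/K(y,p,\mu)=\partial_{s}\ln K(y,p,\mu)$ for the corresponding mean (cf.\ (\ref{16c})), one has $g'=v$. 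So the task is exactly to show $\sup_{s\in\mathds{R}}v(s)<1/p$ when $p\le p_{0}$.

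Since $a>1$, one has $1/(ap)<1/p$ with gap $(a-1)/(ap)\to+\infty$ as $p\to0^{+}$, so it suffices to establish an estimate of the form
\begin{equation*}
v(s)\le\frac{1}{ap}+c_{0}\qquad\text{for all }s\in\mathds{R},
\end{equation*}
with a constant $c_{0}=c_{0}(a,\upsilon)<\infty$ that stays bounded as $p\to0^{+}$: then $E_{2}\le-(a-1)/(ap)+c_{0}<0$ once $p<(a-1)/(a\,c_{0})=:p_{0}$. To prove such a bound I would combine two facts. (i) The sequence $(c_{n})$ is log-concave, since $c_{n}^{2}/(c_{n-1}c_{n+1})=\tfrac{n+1}{n}e^{ap}\ge e^{ap}$, i.e.\ $n\mapsto-\ln c_{n}$ has second differences $\ge ap$; this gives the super-Gaussian tail estimate $c_{n^{*}+j}\le c_{n^{*}}e^{-\frac{ap}{2}|j|(|j|-1)}$ for $|j|\ge1$ about the mode $n^{*}$, so the measure is concentrated in a window of size $O\bigl(1+(ap)^{-1/2}\bigr)$ around $n^{*}$. (ii) A discrete integration by parts for this measure yields $\langle n\rangle_{s}=\upsilon e^{s-ap/2}\langle e^{-apn}\rangle_{s}$, hence the closed relation
\[
v(s)=g(s)\bigl(1-(g(s)-g(s-ap))\bigr),\qquad g(s)-g(s-ap)=\int_{s-ap}^{s}v(u)\,du .
\]
As $g$ increases strictly from $0$ to $+\infty$ and $g(s)-g(s-ap)\in[0,1]$, this already gives $v(s)\le g(s)$ for all $s$, and $v(s)\le g(s)/(1+ap\,g(s))<1/(ap)$ at every $s$ where $v$ is non-increasing on $[s-ap,s]$. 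On the complementary, ``Poisson-like'' range — where $g$ is small, so the bound $v(s)\le g(s)$ is itself small, together with a bounded transition layer near $g(s)\asymp1/(ap)$ — the tail estimate (i) controls the overshoot, yielding the bound with a bounded $c_{0}$. The theorem then follows from the reduction above.

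The main obstacle is precisely this uniform estimate for $v$. As $s\to+\infty$ the measure concentrates in a narrow window and $v(s)\to1/(ap)$ from below, so there is no slack on the leading term: any bound $v(s)\le C/(ap)$ with $C>1$ is worthless, since it is $\ge1/p$ for every small $p$. Crude majorisations fail for this reason — e.g.\ using $n_{1}^{2}+n_{2}^{2}\ge\tfrac12(n_{1}-n_{2})^{2}$ together with $te^{-ct}\le1/(ce)$ in (\ref{17}) gives only $v(s)\le\frac{C}{ap}\,K(y,p/2,\mu)^{2}/K(y,p,\mu)^{2}$, and the ratio $K(y,p/2,\mu)/K(y,p,\mu)$ is unbounded as $s\to+\infty$ — so one genuinely has to exploit the Gaussian decay of the weights (equivalently, their discrete log-curvature $\ge ap$) to pin down the constant, rather than lose a multiplicative factor. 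This is the technical core; once it is in place, the strictly-concave reduction closes the argument.
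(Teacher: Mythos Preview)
Your overall strategy coincides with the paper's: both reduce to showing that $E_{2}(y,p,\mu)<0$ for all $y,\mu$ once $p$ is small, i.e.\ that $p\,v(s)<1$ where $v(s)=\partial_{s}^{2}\ln K$ is the single-cell variance. The paper in fact proves the sharper inequality $p\,v(s)\le 1/a$ for all $s\in\mathds{R}$ and all $p\le p_{0}$, which gives strict concavity immediately.

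The gap in your argument is that the central estimate is never actually proved. The identity $v(s)=g(s)\bigl(1-(g(s)-g(s-ap))\bigr)$ is correct, and the two consequences you draw from it --- $v\le g$ everywhere, and $v(s)<1/(ap)$ whenever $v$ is non-increasing on $[s-ap,s]$ --- are valid. But this leaves open exactly the regime you call the ``transition layer'': points where $g(s)$ is already large (so $v\le g$ is useless) but $v$ is still increasing on $[s-ap,s]$ (so the second bound does not apply). You acknowledge this yourself (``This is the technical core; once it is in place\dots''). The appeal to log-concavity and Gaussian tails is only heuristic here; nothing you have written rules out $v$ overshooting $1/(ap)$ by an amount of order $1/p$, which would destroy the bound $v\le 1/(ap)+c_{0}$ with bounded $c_{0}$. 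As you correctly observe, there is no slack on the leading constant, so a soft argument cannot close this.

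The paper's way around this is quite different and avoids the transition-layer analysis altogether. For $s\le s_{0}$ (any fixed threshold) one uses the crude denominator bound $K\ge 1$ in (\ref{17}) to get $v(s)\le\upsilon\exp\bigl(s+2\upsilon e^{s}\bigr)$, whence $p\,v(s)\le 1/a$ once $p$ is small enough. For $s\ge s_{0}$ the paper combines the global upper bound $\phi(s,p)\le\upsilon+s^{2}/(2ap)$ (which is immediate from $sn-\tfrac{ap}{2}n^{2}\le s^{2}/(2ap)$) with the second-order Taylor expansion $\phi(\xi^{-1}s,p)=\phi(0,p)+\phi_{1}(0,p)\xi^{-1}s+\tfrac{1}{2}\phi_{2}(s,p)\xi^{-2}s^{2}$ for some $\xi\in(0,1]$: comparing the two forces $\phi_{2}(s,p)\le 1/(ap)$ as soon as $\xi^{-1}s\ge\psi(p):=(\upsilon-\phi(0,p))/\phi_{1}(0,p)$, and since $\psi(p)\to 0$ as $p\to 0$ this covers all $s\ge s_{0}$ for small $p$. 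This argument is short, elementary, and yields the bound with no additive constant; it would replace your incomplete transition-layer discussion.
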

\begin{proof}
	In view of remark \ref{1rk}, we have to show that, for fixed $p\leqslant
	p_0$ and all $\mu\in \mathds{R}$, $E(y,p,\mu)$ has exactly one local
	maximum such that (\ref{16a}) holds. For $x\in \mathds{R}$, we set,
	cf. (\ref{15}),
	\begin{eqnarray}
	\label{23}
	\phi(x,p) &=& \ln \sum_{n=0}^\infty \frac{\upsilon^n}{n!} \exp\left(
	x n - \frac{ap}{2} n^2\right), \\ \nonumber
	\phi_k (x,p) &=&
	\frac{\partial^k}{\partial x^k}\phi(x,p), \qquad k=1,2.
	\end{eqnarray}
	Similarly to (\ref{16}), we  get
	\begin{equation}
	\label{24}
	\phi(x,p) \leqslant \upsilon + \frac{x^2}{2ap}\,.
	\end{equation}
	Note also that, for  the functions defined in (\ref{23}), we have
	\begin{equation}
	\label{23a}
	\lim_{p\to 0} \phi(x,p) = \lim_{p\to 0} \phi_1(x,p) = \lim_{p\to 0}
	\phi_2(x,p) = \upsilon \re^x, \qquad x\in \mathds{R}.
	\end{equation}
	By means of (\ref{23}) we rewrite the equation in (\ref{16c}) in the
	following form
	\begin{equation}
	\label{25}
		\left\{
	\begin{array}{ll}
	 x = y + \mu,\\
	 \mu = x -  p \phi_1 (x, p).
	\end{array}
	\right.
	\end{equation}
	Our aim is to show that there exists $p_0>0$ such that, for each
	$p\in (0,p_0]$, the following holds: (a) the second line in
	(\ref{25}) defines an increasing unbounded function $\tilde{\mu}
	(x)$, $x\in \mathds{R}$; (b) $p \phi_2 (x,p) \leqslant 1-\delta$ for some
	$\delta\in (0,1)$ and all $x\in \mathds{R}$. Indeed, the function
	mentioned in (a) can be inverted to give an unbounded increasing
	function $\tilde{x} (\mu)$, $\mu \in \mathds{R}$, such that the
	solution of (\ref{16c}) is $\bar{y}(\mu) = \tilde{x}(\mu ) - \mu$.
	Then, by (\ref{23}) and (\ref{17}) we get
	\[
	E_2 (\bar{y}(\mu), p, \mu) = -\frac{1}{p} + \phi_2 (\tilde{x}(\mu),
	p) <0,
	\]
	where the latter inequality follows by (b). Thus, to prove both (a)
	and (b), it is enough to show that there exists positive $p_0$ such
	that
	\begin{equation}
	\label{26}
	p \phi_2 (x,p) \leqslant \frac{1}{a}\,, \qquad {\rm for} \ \ {\rm all} \ \
	x \in \mathds{R} \ \ {\rm and} \ \ p\in (0,p_0].
	\end{equation}
	By (\ref{23}) we have
	\begin{eqnarray}
	\label{d} \phi_2 (x,p) & = & \frac{1}{2\Phi(x,p)} \sum_{n_1, n_2
		=0}^\infty
	\frac{\upsilon^{n_1+n_2}}{n_1! n_2!} (n_1-n_2)^2   \exp \left[  x
	(n_1 + n_2) - \frac{ap}{2} (n_1^2 + n_2^2)\right],  \\[.2cm]
	\label{d1} 
	\Phi(x,p) & = & \left[ \sum_{n=0}^\infty
	\frac{\upsilon^n}{n!} \exp\left(x n - \frac{ap}{2} n^2
	\right)\right]^2 \nonumber\\[.2cm]
	& = & \sum_{n_1, n_2 =0}^\infty \frac{\upsilon^{n_1+n_2}}{n_1! n_2!}
	\exp \left[  x (n_1 + n_2) - \frac{ap}{2} (n_1^2 + n_2^2)\right].
	\end{eqnarray}
	Since $\Phi(x,p) \geqslant 1$, we  get from the latter
	\begin{eqnarray}
	\label{27}
	\phi_2 (x,p) & \leqslant & \frac{1}{2} \sum_{n_1, n_2 =0}^\infty
	\frac{\upsilon^{n_1+n_2}}{n_1! n_2!} (n_1-n_2)^2 \exp \left[  x
	(n_1 + n_2) - \frac{ap}{2} (n_1^2 + n_2^2)\right]  \nonumber \\[.2cm]
	\nonumber & \leqslant & \frac{1}{2} \sum_{n_1, n_2 =0}^\infty
	\frac{\upsilon^{n_1+n_2}}{n_1! n_2!} (n_1-n_2)^2 \exp \left[  x
	(n_1 + n_2)\right]  \\[.2cm]  & = &
	\left[ \sum_{n=1}^\infty \frac{n^2}{n!} \left(\upsilon \re^x
	\right)^n \right]  \cdot \left[ \sum_{n=0}^\infty \frac{1}{n!}
	\left(\upsilon \re^x \right)^n \right]  - \left[ \sum_{n=1}^\infty
	\frac{n}{n!} \left(\upsilon \re^x \right)^n \right]^2 =
	\upsilon \exp ( x+ 2 \upsilon \re^x ).
	\end{eqnarray}
	Fix some $x_0>0$ and then set
	\begin{equation}
	\label{27a}
	p_{01} = \frac{1}{a\upsilon} \exp (- x_0 - 2 \upsilon \re^{x_0} ).
	\end{equation}
	Then, by (\ref{27}) we get
	\begin{equation}
	\label{29a}
	p \phi_2 (x,p) \leqslant \frac{1}{a}\,, \qquad {\rm for} \ \ {\rm all} \ \
	x \leqslant x_0 \ \ {\rm and} \ \ p\leqslant p_{01}.
	\end{equation}
	By (\ref{23}) and (\ref{23a}) we see that the function
	\begin{equation}
	\label{28}
	\psi(p) = \frac{\upsilon - \phi(0,p)}{\phi_1(0,p)}
	\end{equation}
	continuously depends on $p>0$ and $\psi(p) \to 0$ as $p\to 0$. For
	each $x>0$, one finds $\xi \in (0,1]$, dependent on $x$ and $p$,
	such that
	\begin{equation*}
	\phi (x,p) = \phi(0,p) + \phi_1 (0,p) x + \frac{1}{2} \phi_2(\xi
	x,p) x^2.
	\end{equation*}
	From this and (\ref{24}) we get
	\begin{equation}
	\label{29}
	\phi (\xi^{-1} x,p) = \phi(0,p) + \phi_1 (0,p) \xi^{-1} x +
	\frac{1}{2} \phi_2( x,p)\xi^{-2} x^2 \leqslant \upsilon + \frac{\xi^{-2}
		x^2}{2ap}\,.
	\end{equation}
	For the function defined in (\ref{28}) and $x_0$ as in (\ref{27a}),
	we pick $p_{02}$ such that $\psi(p) \leqslant x_0$ for all $p\leqslant
	p_{02}$. For such values of $p$, this yields
	\[
	\phi_1(0,p) \xi^{-1} x \geqslant \phi_1(0,p)  x_0  \geqslant \upsilon -
	\phi(0,p), \qquad {\rm for}\ \ {\rm all} \ \ x\geqslant x_0\,.
	\]
	We apply this in (\ref{29}) and get
	\begin{equation}
	\label{30}
	p \phi_2 (x, p) \leqslant \frac{1}{a}\,, \qquad {\rm for} \ \ {\rm all} \ \
	x \geqslant x_0 \ \ {\rm and} \ \ p\leqslant p_{02}\,.
	\end{equation}
	This and (\ref{29a})  yields (\ref{26}) with $p_0 =
	\min\{p_{01}; p_{02}\}$, which completes the proof.
\end{proof}
\begin{theorem}
	\label{1tm}
	For each $a>1$, there exists $p_1 = p_1 (a) >0$ such that, for each
	$p\geqslant p_1$, the line $l_{p}=\{(p, \mu):\mu \in \mathds{R}\}$
	contains at least one phase-coexistence point.
\end{theorem}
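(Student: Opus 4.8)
The plan is to recast the problem through the substitution $x=y+\mu$ already used in (\ref{25}): a point $\bar y$ is a critical point of $E(\cdot,p,\mu)$ if and only if $x=\bar y+\mu$ solves $\mu=\tilde\mu(x)$, where $\tilde\mu(x):=x-p\phi_1(x,p)$ with $\phi_1$ as in (\ref{23}), and moreover $\tilde\mu'(x)=1-p\phi_2(x,p)$. Since $\phi_1\geq0$, one has the free inequality $\tilde\mu(x)\leq x$ for every $x$. The proof then has two steps: (i) show that for all sufficiently large $p$ the function $\tilde\mu$ fails to be monotone, i.e. $\tilde\mu'(x_*)<0$ for a suitable $x_*$; (ii) show, by a soft convexity argument, that non-monotonicity of $\tilde\mu$ is incompatible with $(p,\mu)$ lying in a single-phase domain for \emph{every} $\mu\in\mathds R$, which forces a phase-coexistence point on $l_p$.

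For Step (i) I would test $\tilde\mu'$ at the point $x_*=\tfrac{ap}{2}-\ln\upsilon$. There the weights $\tfrac{\upsilon^n}{n!}\exp\!\big(x_*n-\tfrac{ap}{2}n^2\big)$ appearing in (\ref{23}) collapse to $\tfrac{1}{n!}\exp\!\big(-\tfrac{ap}{2}n(n-1)\big)$, so that the $n=0$ and $n=1$ weights equal $1$ while each weight with $n\geq2$ is at most $\tfrac1{n!}e^{-ap}$. Hence the probability measure on $\mathds N_0$ with these weights converges, as $p\to\infty$, to the uniform law on $\{0,1\}$, so its variance $\phi_2(x_*,p)$ tends to $\tfrac14$; a crude tail bound gives $\phi_2(x_*,p)\geq\tfrac14-3e^{-ap}$. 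Therefore $p\,\phi_2(x_*,p)\to+\infty$, and there is $p_1=p_1(a)>0$ (depending only on $a$, since the factors $\upsilon^n$ cancelled) such that $p\,\phi_2(x_*,p)>1$, i.e. $\tilde\mu'(x_*)<0$, for every $p\geq p_1$.

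For Step (ii), fix $p\geq p_1$ and set $M(\mu):=\sup_{y\in\mathds R}E(y,p,\mu)$. By (\ref{16}) the supremum is attained, and for $\mu$ in any bounded set all maximizers lie in one fixed compact subset of $\mathds R$; since $\mu\mapsto E(y,p,\mu)$ is convex for each fixed $y$ (it is $-y^2/2p$ plus the logarithm of a sum of positive exponentials in $\mu$), $M$ is convex on $\mathds R$. If $M$ is differentiable at $\mu_0$, then comparing $M(\mu)\geq E(\bar y,p,\mu)$ with the tangent of $\mu\mapsto E(\bar y,p,\mu)$ at $\mu_0$ shows that $M'(\mu_0)=\partial_\mu E(\bar y,p,\mu_0)=\bar y/p$ for every maximizer $\bar y$; hence the maximizer is unique and equals $pM'(\mu_0)$, and by (\ref{25}) the point $x(\mu_0):=pM'(\mu_0)+\mu_0$ satisfies $\tilde\mu(x(\mu_0))=\mu_0$. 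Were $M$ differentiable on all of $\mathds R$, it would be $C^1$ (its derivative is monotone and, being a derivative, has the intermediate value property, hence is continuous), so $\mu\mapsto x(\mu)$ would be a continuous right inverse of $\tilde\mu$. Its image $I$ is an interval on which $\tilde\mu$ is a continuous bijection onto $\mathds R$, hence strictly monotone; because $\tilde\mu$ is finite and continuous on all of $\mathds R$ and satisfies $\tilde\mu(x)\leq x$, this forces $I=\mathds R$ with $\tilde\mu$ strictly increasing on $\mathds R$, contradicting $\tilde\mu'(x_*)<0$. Hence $M$ is not differentiable at some $\mu_c=\mu_c(p)$, where its left and right derivatives differ. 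Choosing differentiability points $\mu_k\uparrow\mu_c$ and $\mu'_k\downarrow\mu_c$ with $M'(\mu_k)$ and $M'(\mu'_k)$ tending to these two distinct one-sided derivatives, the associated unique maximizers $pM'(\mu_k)$ and $pM'(\mu'_k)$ are bounded (by (\ref{16})) and converge to two distinct limits, each of which is a maximizer of $E(\cdot,p,\mu_c)$ by upper semicontinuity of the $\arg\max$. Thus $E(\cdot,p,\mu_c)$ has at least two global maxima $\bar y$, so the product measure (\ref{19a}) is not uniquely defined and $(p,\mu_c)\in l_p$ is a phase-coexistence point.

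The main obstacle is Step (ii): converting the soft statement ``$\tilde\mu$ is not monotone'' into two coexisting \emph{global} maxima of $E(\cdot,p,\mu)$ at a single value of $\mu$. Convexity of $M$ together with the impossibility of a continuous section of $\tilde\mu$ resolves this, but one has to be careful that a kink of $M$ genuinely produces two maximizers (this is where the upper semicontinuity of the $\arg\max$ and the uniform bound (\ref{16}) are used). A more hands-on alternative would be a Maxwell-type construction: select the two ``outer'' branches $x_\pm(\mu)$ of the $S$-shaped curve $\mu=\tilde\mu(x)$, observe that $\tfrac{d}{d\mu}E(x_\pm(\mu)-\mu,p,\mu)=(x_\pm(\mu)-\mu)/p$ along each, so that $E(x_-(\mu)-\mu,p,\mu)-E(x_+(\mu)-\mu,p,\mu)$ is strictly monotone in $\mu$, and then locate its zero; the delicate point there is the branch bookkeeping near the fold values, where the real-analyticity of $\tilde\mu$ (and hence the isolatedness of the folds) can be invoked to additionally get that $\mu_c$ lies between two distinct single-phase domains.
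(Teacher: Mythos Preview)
Your proof is correct. Step (i) is essentially the paper's argument: the paper tests at $x=ap/2$ and obtains the lower bound $p\phi_2(ap/2,p)\geq p\upsilon e^{-2\upsilon}$ (by bounding $\Phi\leq e^{2\upsilon}$ and keeping only the $n_1+n_2=1$ terms in the numerator of (\ref{d})), hence $p_1=\upsilon^{-1}e^{2\upsilon}$. Your shift to $x_*=ap/2-\ln\upsilon$ kills the $\upsilon$-dependence in the weights and yields a $p_1$ that genuinely depends only on $a$, which is a minor sharpening.

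Step (ii), however, is a genuinely different route. The paper carries out exactly the ``hands-on alternative'' you sketch at the end: it parametrizes the $S$-curve by $y$, isolates the two outer increasing branches $\bar y_1,\bar y_2$ and the middle decreasing branch $\bar y_3$ on the interval $M_p=(\hat\mu^p,\hat\mu_p)$, sets $D(\mu)=E(\bar y_2(\mu),p,\mu)-E(\bar y_1(\mu),p,\mu)$, computes $D'(\mu)=(\bar y_2(\mu)-\bar y_1(\mu))/p>0$, and locates the unique zero $\mu_c$ by checking that $D$ changes sign near the fold endpoints (where $\bar y_2$ merges with the local minimum $\bar y_3$, resp.\ $\bar y_1$ with $\bar y_3$). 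Your argument replaces all of this with the convexity of the pressure $M(\mu)=\sup_y E(y,p,\mu)$: differentiability of $M$ everywhere would produce a continuous section of $\tilde\mu$, forcing $\tilde\mu$ to be increasing on $\mathds R$, which contradicts Step (i); the resulting kink then yields two distinct global maximizers as limits along one-sided sequences. What your approach buys is economy and robustness --- no branch bookkeeping near the folds, no need to count or order the local extrema of $\bar\mu(y)$. What the paper's construction buys is extra information your argument does not directly deliver: the two coexisting maxima are non-degenerate ($E_2<0$ there), $\mu_c$ is the \emph{unique} coexistence value in $M_p$, and $(p,\mu_c)$ sits on the common boundary of two single-phase domains (the Maxwell picture). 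For the bare statement of Theorem~\ref{1tm} your argument suffices; for the finer structural claims in the surrounding discussion the paper's explicit construction is needed.
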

\begin{proof}
	Let $\bar{\mu}(y)$ be the function as in proposition \ref{01pn}. By
	(\ref{23}) and (\ref{25}) we have that
	\begin{equation}
	\label{s}
	\bar{\mu}(y) = \left[x - p \phi_1 (x,p) \right]_{x=\bar{x}(y)},
	\end{equation}
	and
	\begin{equation*}
	\lim_{y\to 0} \bar{\mu} (y) = - \infty, \qquad \lim_{y\to +\infty}
	\bar{\mu} (y) = + \infty.
	\end{equation*}
	In (\ref{s}), $\bar{x}(y)$ is the inverse of the function
	$\mathds{R}\ni x \mapsto y= p \phi_1 (x,p)$. Note that
	\begin{equation}
	\label{Naha1}
	\frac{\rd \bar{x}(y)}{\rd y} = \left[ \frac{1}{ p \phi_2
		(x,p)}\right]_{x=\hat{x}(y)},
	\end{equation}
	hence $(0, +\infty) \ni y \mapsto \bar{x}(y) \in \mathds{R}$ is
	increasing. By (\ref{s}) and (\ref{Naha1}) it follows that
	\begin{equation}
	\label{s1}
	\frac{\rd \bar{\mu}(y)}{\rd y} = \frac{\rd \bar{x}(y)}{ \rd y} \left[1 -
	p \phi_2 (x,p) \right]_{x=\bar{x}(y)} =  \left[ \frac{1}{ p \phi_2
		(x,p)}\right]_{x=\bar{x}(y)} - 1.
	\end{equation}
	For a given $p>0$, pick $x^p>0$ such that $\psi(p)$ defined in
	(\ref{28}) satisfies $\psi(p) \leqslant x^p$. Then, as in (\ref{30}) we
	obtain $p \phi_2 (x, p) \leqslant 1/a$ for all $x\geqslant x^p$. By (\ref{s1}),
	this yields that
	\begin{equation}
	\label{Naha2}
	\frac{\rd \bar{\mu}(y)}{\rd y} \geqslant a -1 , \qquad {\rm for} \ \ y \in
	[y^p, +\infty),
	\end{equation}
	where  $y^p= p\phi_1(x^p,p)$. For the same $p$, let $x_p$ be such
	that $a\upsilon p = \exp( - x_p -2 \upsilon \re^{x_p})$,
	cf. (\ref{27a}). Then, by (\ref{27}) and (\ref{s1}), we conclude that
	the inequality in (\ref{Naha2}) holds also for  $y\in (0,y_p]$,
	$y_p:= p\phi_1(x_p,p)$. As we have seen in the proof of theorem
	\ref{4pn}, the mentioned two intervals may overlap, i.e., it may be
	that $y_p > y^p$, if $p$ is small enough. Let us prove that this is
	not the case for big $p$. That is, let us show that there exists
	$p_1>0$ such that, cf. (\ref{s1}) and (\ref{30}), for all $p\geqslant
	p_1$, there exists $x\in \mathds{R}$ such that the following holds
	\begin{equation}
	\label{s2}
	p_1 \phi_2 (x, p_1) \geqslant 1 , \qquad {\rm and} \quad p \phi_2 (x, p)
	> 1, \quad {\rm for} \quad p>p_1.
	\end{equation}
	To this end, we estimate the denominator of (\ref{d}) from the above and
	the numerator from the below. For $x = a p/2$, by (\ref{d1})  it follows
	that
	\begin{eqnarray}
	\label{s3}
	\Phi (x, p) =  \left\lbrace  \sum_{n=0}^\infty \frac{\upsilon^{n}}{n!}
	\exp\left[  - \frac{ap}{2}n(n-1) \right] \right\rbrace ^2 \leqslant \left(
	\sum_{n=0}^\infty \frac{\upsilon^{n}}{n!}\right)^2 = \re^{2\upsilon},
	\end{eqnarray}
	where we used the fact that $p>0$. In the sum in the numerator of
	(\ref{d}), we take just two summands corresponding to $n_1=1$,
	$n_2=0$ and $n_1=0$, $n_2=1$, and obtain, by (\ref{s3}), the following
	estimate
	\begin{equation}
	\label{s4}
	p \phi_2 (x, p) \geqslant p \upsilon \re^{- 2 \upsilon},
	\end{equation}
	holding for $x=ap/2$.  Then, we set $p_1= \upsilon^{-1}\re^{2
		\upsilon}$. For this $p_1$ and $x =  ap_1/2$, by (\ref{s4}) we
	obtain (\ref{s2}). Clearly,  for $p> p_1$, $x_p$ and $x^p$
	introduced above satisfy
	\[
	x_p < ap/2 < x^p.
	\]
	For $p>p_1$, let $({x}_p, {x}^p)$ be the biggest interval which
	contains $x=ap/2$ and is such that $p \phi_2 (x, p) >1$ for each
	$x\in ({x}_p, {x}^p)$. Then, $p \phi_2 (x, p) =1$ for $x= {x}_p$ and
	$x ={x}^p$. Set
	\[
	{y}_p =p \phi_1 ({x}_p, p), \qquad {y}^p =p \phi_1 ({x}^p, p).
	\]
	Then, by (\ref{s1}) and (\ref{Naha2}) it follows that
	\begin{eqnarray}
	\label{s5}
	\frac{\rd\bar{\mu}(y)}{\rd y} & = & 0, \qquad {\rm for} \quad
	y = {y}_p , \ {y}^p, \nonumber\\[.2cm] \nonumber
	\frac{\rd\bar{\mu}(y)}{\rd y} & <  & 0, \qquad {\rm for} \quad
	y\in ({y}_p , {y}^p),
	\\[.2cm] 
	\frac{\rd\bar{\mu}(y)}{\rd y} & \geqslant & a-1> 0, \qquad {\rm for} \quad
	y< {y}_p  \ \ {\rm and} \ \ y> {y}^p.
	\end{eqnarray}
	From this we see that ${y}_p$ (resp. ${y}^p$) is the first maximum
	(resp. the last minimum) of $\bar{\mu}(y)$. Let $\hat{y}^p$ be the
	first minimum of $\bar{\mu}(y)$. Set $\hat{\mu}^p = \bar{\mu}
	(\hat{y}^p)$. Now we pick $y_2 > \hat{y}^p$ such that: (a) either
	$\bar{\mu}(y_2) = \bar{\mu}(y_p)$; (b) or $y_2$ is the second
	maximum of $\bar{\mu}(y)$ if $\bar{\mu}(y_2) \leqslant \bar{\mu}(y_p)$.
	Then, set $\hat{\mu}_p = \bar{\mu}(y_2)$. Clearly, $\hat{\mu}_p >
	\hat{\mu}^p$. In case (a), we have $\hat{\mu}_p = \bar{\mu}(y_p)$;
	and  $\hat{\mu}_p \leqslant \bar{\mu}(y_p)$ in case (b). Now we pick $y_1
	\in (0, y_p)$ such that $\bar{\mu}(y_1) = \bar{\mu}(\hat{y}^p)$. By
	(\ref{s5}) and the above construction,  the function $\bar{\mu}(y)$
	is increasing on $[y_1, y_p)$ and $(\hat{y}_p, y_2)$, and decreasing
	on $(y_p, \hat{y}^p)$, see figure \ref{fig_3}. Let $\bar{y}_1$ and
	$\bar{y}_2$ be the inverse functions to the restrictions of
	$\bar{\mu}(y)$ to $[y_1, y_p)$ and $(\hat{y}_p, y_2)$, respectively.
	Let also $\bar{y}_3$ be the inverse function to the restrictions of
	$\bar{\mu}(y)$ to the interval $(y_p, \hat{y}^p)$. All the three
	functions are defined on $M_p:=(\hat{\mu}^p, \hat{\mu}_p)$ and are
	continuously differentiable thereon.
	Note that
	$\bar{y}_2 (\mu) > \bar{y}_3 (\mu)
	> \bar{y}_1 (\mu)$ for all $\mu\in (\hat{\mu}^p, \hat{\mu}_p)$ and
	\begin{equation}
	\label{Nah}
	\bar{y}_1 (\hat{\mu}_p ) = \bar{y}_3 (\hat{\mu}_p ), \qquad
	\bar{y}_2 (\hat{\mu}^p ) = \bar{y}_3 (\hat{\mu}^p ).
	\end{equation}
	Moreover, all the three $\bar{y}_i (\mu)$, $i=1,2,3$, satisfy
	(\ref{16c}) and, for $\mu \in (\hat{\mu}^p, \hat{\mu}_p)$,
	$E(y,p,\mu)$ has local maxima at $y=\bar{y}_i(\mu)$, $i=1,2$ and a
	local minimum at $y=\bar{y}_3(\mu)$. This follows from the fact that
	$E_2 (\bar{y}_i (\mu), p, \mu)< 0$ for $\mu\in M_p$ and $i=1,2$, and
	from  $E_2 (\bar{y}_3 (\mu), p, \mu)> 0$ for $\mu\in M_p$, see
	(\ref{s5}).
	
	Set
	\begin{equation}
	\label{31} D(\mu) = E(\bar{y}_2 (\mu), p, \mu) - E(\bar{y}_1 (\mu), p,
	\mu), \qquad \mu \in M_p.
	\end{equation}
	If $D(\mu) < 0$, then $\bar{y}_1(\mu)$ is the point of global
	maximum of $E(y,p,\mu)$ and hence $(p,\mu)$ lies in a single-phase
	domain, say $\mathcal{R}_1$. If $D(\mu) > 0$, then the same holds
	for $\bar{y}_2(\mu)$ and $\mathcal{R}_2$. If
	\begin{equation}
	\label{Na}
	D(\mu_1) < 0, \qquad {\rm and} \qquad D(\mu_2) > 0,
	\end{equation}
	for some $\mu_1,\mu_2 \in M_p$, then there should exist $\mu_\text{c}$ in
	between where $D$ vanishes. Thus, $(p,\mu_\text{c})$ belongs to the
	boundaries of both $\mathcal{R}_1$ and $\mathcal{R}_2$, and hence is
	a phase coexistence point, if $\mu_\text{c}$ is an isolated zero of
	(\ref{31}). The phases are then given in (\ref{19}) and (\ref{19a})
	with $\bar{y}_1(\mu_\text{c})$ and $\bar{y}_2(\mu_\text{c})$, respectively. Note
	that the vanishing of $D$ at $\mu_\text{c}$ corresponds to the Maxwell
	rule, cf. \cite{LebP}, and to the existence of two global maxima of
	$E(y , p, \mu)$. Since both $\bar{y}_i(\mu)$ are differentiable, by
	(\ref{14}), (\ref{15}), (\ref{16b}), and (\ref{17}) we have
	\begin{eqnarray*}
		\frac{\rd D (\mu)}{\rd  \mu} & = & E_1 (\bar{y}_2 (\mu), p , \mu)
		\frac{\rd \bar{y}_2(\mu)}{ \rd \mu}  + \frac{K_1 (\bar{y}_2 (\mu), p ,
			\mu)}{K
			(\bar{y}_2 (\mu), p , \mu)} 
		 -  E_1 (\bar{y}_1 (\mu), p , \mu) \frac{\rd
			\bar{y}_1(\mu)}{ \rd \mu}  - \frac{K_1 (\bar{y}_1 (\mu), p , \mu)}{K
			(\bar{y}_2 (\mu), p , \mu)}  \nonumber \\
		& = & \frac{K_1 (\bar{y}_2 (\mu), p ,\mu)} {K (\bar{y}_2 (\mu), p , \mu)} - \frac{K_1 (\bar{y}_1 (\mu), p
			, \mu)}{K (\bar{y}_2 (\mu), p , \mu)} 
		 =  E_2 (\bar{y}_*(\mu), p , \mu) + \frac{1}{p} >0
	\end{eqnarray*}
	for some $\bar{y}_* (\mu)\in [\bar{y}_1 (\mu), \bar{y}_2 (\mu)]$.
	Note that $E_1 (\bar{y}_i (\mu), p , \mu)=0$, $i=1,2$, cf.
	(\ref{16b}). Therefore, $D$ can hit the zero level once at most. Let
	us show that (\ref{Na}) does hold. If $\mu_1\in M_p$ is close enough
	to $\hat{\mu}^p$, then by (\ref{Nah}) and the mentioned continuity
	we have that $E(\bar{y}_2 (\mu_1), p, \mu_1)$ is close to
	$E(\bar{y}_3 (\mu_1), p, \mu_1)$, and hence $\bar{y}_2$ cannot be
	the global maximum of $E(y , p, \mu_1)$. Therefore, $D(\mu_1) < 0$
	for such $\mu$. Likewise, we establish the existence of $\mu_2$ such
	that $D(\mu_2) > 0$. Now, the existence of $\mu_\text{c}\in (\mu_1, \mu_2)$
	follows by the continuity and~(\ref{Na}).
\end{proof}

\section{Numerical results}
\label{sec:3}

Here, we present the results of numerical calculations of the
functions which appear in the preceding part of the paper.

We begin by considering the extremum points of the functions that
appear in section~\ref{sec2.4}. According to definition \ref{1df}, the line
$l_p =\{ (p,\mu): \mu\in \mathds{R}\}$ lies in a single-phase
domain, if the function $\mathds{R}_{+} \ni y \mapsto E(y, p, \mu)$,
see (\ref{14}), has a unique non-degenerate global maximum for all
$\mu\in \mathds{R}$. The corresponding condition in (\ref{16c})
determines an increasing function $\bar{y}(\mu)$, see proposition
\ref{1pn}, which can be inverted to give $\bar{\mu}(y)$, see
(\ref{s}). In theorem \ref{4pn}, we show that this holds for
 small enough $p$. Figure \ref{fig_1} presents the results of the
calculation of $\bar{\mu}(y)$ for

\begin{figure}[!b]
	\begin{centering}
		a)\includegraphics[width=150pt]{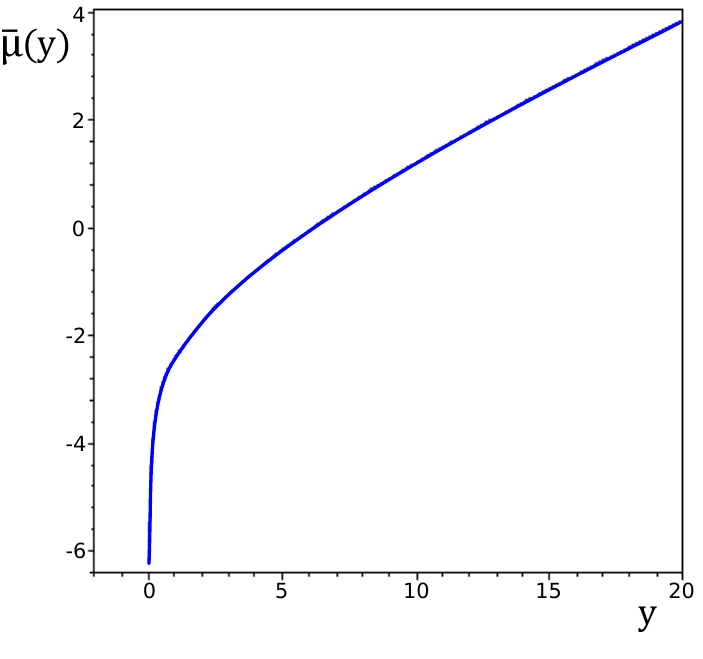} 
		b)\includegraphics[width=150pt]{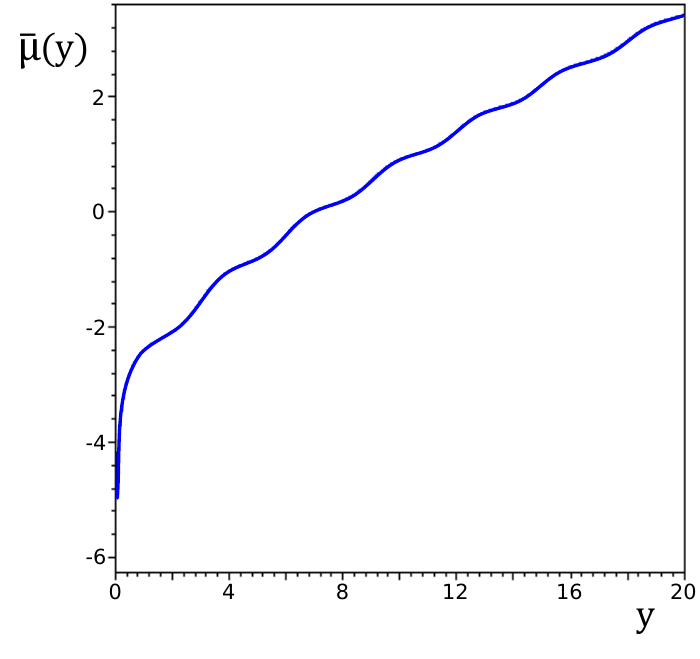}
		~~~~~~~~~~~~~~~~~~~~~~~~~~~~~~~~
		c)\includegraphics[width=150pt]{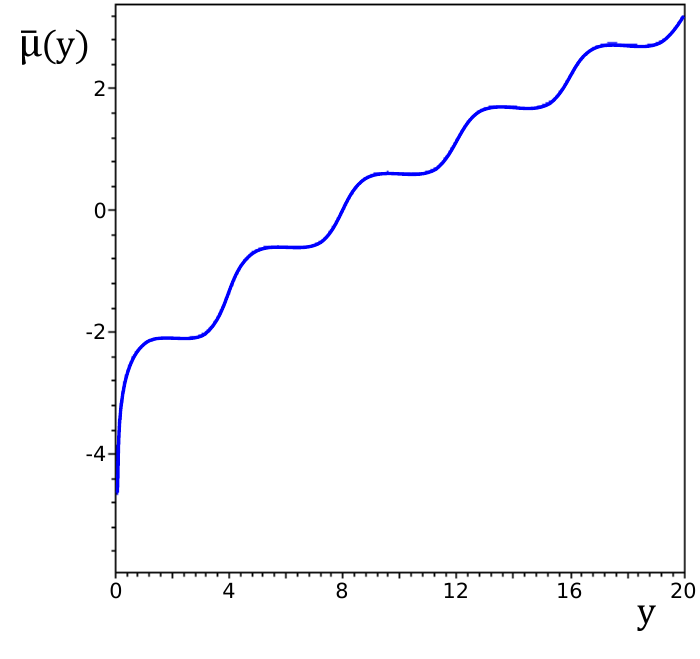}
		d)\includegraphics[width=150pt]{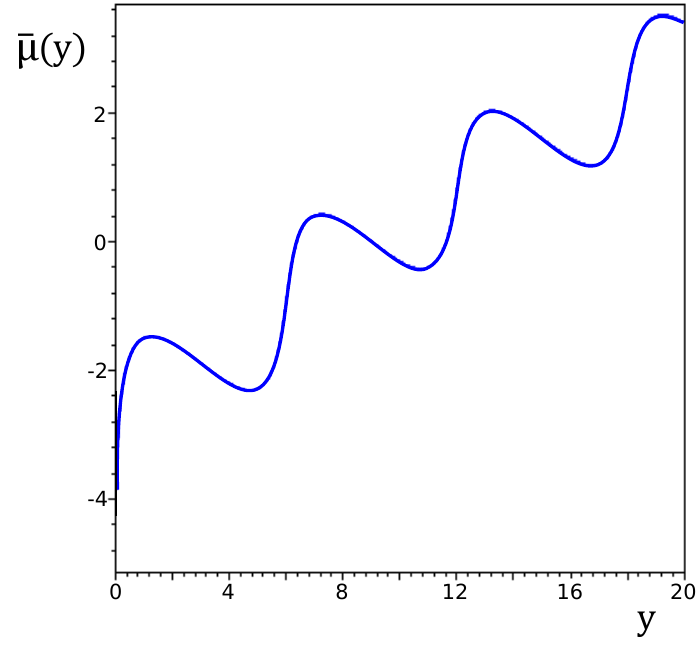}
		\caption{(Colour online) Plot of $\bar\mu(y)$ defined in (\ref{s}) for $a=1.2$,
			$\upsilon = 12$ and various values of the attraction parameter:
			$p=2$ (curve a), $p=3$ (curve b), $p=4$ (curve c), $p=6$ (curve d). }
		\label{fig_1}
	\end{centering}
\end{figure}
\begin{equation}\label{42m}
a = 1.2,\qquad \upsilon = 12,
\end{equation}
and $p=2,3,4,6$ --- curves (a), (b), (c) and (d), respectively. In
the first two cases, $\bar{\mu}$ is an increasing function, which
corresponds to the situation described in theorem \ref{4pn}. That
is, the values of $p=2$ and $3$ are below the critical value $p_\text{c} =
p_\text{c}(a)$. For $a$ and $\upsilon$ as in (\ref{42m}), our calculations
yield
\begin{equation*}\label{45m}
p_\text{c} = 3.928235(8).
\end{equation*}
For $p=p_\text{c}$, the function $y\mapsto E(y, p_\text{c}, \mu)$ still has a
unique global maximum, which gets degenerate, i.e., $E_2 (\bar{y},
p_\text{c}, \mu)=0$, cf. (\ref{16a}). The value of $\bar{y}=\bar{y}_\text{c}$ at
which this occurs gives the value of the critical density $\bar{n}_\text{c}
= \bar{y}_\text{c} / p_\text{c}$, see (\ref{22}). For various values of the
parameter $a$, see (\ref{7a}), the values of $p_\text{c}(a)$, $\bar{y}_\text{c}
(a)$ and $\bar{n}_\text{c} (a)$ are given in the following table~\ref{tab:table1}.
\begin{table}[!t]
	\centering
	\caption{Values of $p(a)$, $\bar y (a)$, $ \bar n (a)$ in the critical point for $v = 12$.}
	\label{tab:table1}
	\vspace{2ex}
	\begin{tabular}{|c|c|c|c|c|c|}
	\hline\hline
		$a$ \ \  & \  \ 1.0001 \ \ & \ \ 1.2 \ \ & \ \ 1.5 \ \ & \ \ 2 \  \ & \ \ 10 \ \ \\
		\hline
		$ p_\text{c}(a)$  &  3.8255  &   3.9282  &  3.9796  &  3.9973  &  4.0000  \\
		\hline
		$\bar y_\text{c} (a)$ \ \  & \  \ 2.0485 \ \ & \ \  2.0187  \ \ & \ \ 2.0052 \ \ & \ \ 2.0007 \  \ & \ \ 2.0000 \ \ \\
		\hline
		$\bar n_\text{c} (a)$ \ \  & \  \ 0.5355 \ \ & \ \  0.5139 \ \ & \ \ 0.5038 \ \ & \ \ 0.5005 \  \ & \ \ 0.5000 \ \ \\
		\hline\hline
	\end{tabular}
\end{table}

\begin{figure}[!t]
	\begin{centering}
		\includegraphics[width=0.4\textwidth]{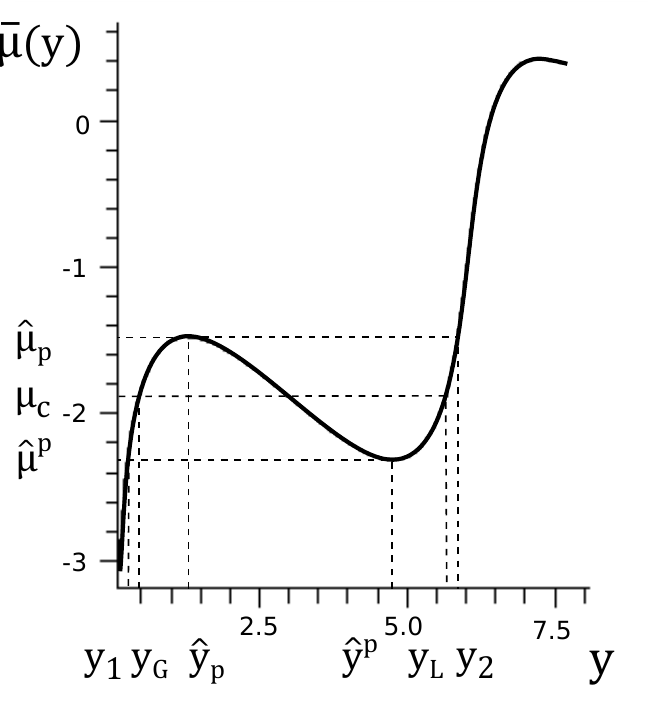}
		\caption{ Plot of $\bar{\mu}(y)$ for $a=1.2$ and $p=6$ more in 
			detail. } \label{fig_3}
	\end{centering}
\end{figure}
The values of $p=4$ and $6$ are above the critical point $p_\text{c}$,
which can be clearly  seen from the curves (c) and (d) of figure
\ref{fig_1}. In this case, one deals with the situation described by
theorem \ref{1tm}. figure \ref{fig_3} presents more in  detail the
curve plotted in figure \ref{fig_1} (d), i.e., corresponding to $p=6$
and $a=1.2$. It provides a good illustration to the proof of theorem
\ref{1tm}. Here, we have $\bar{\mu}(y_2) = \bar{\mu}(y_p)$.

Let us now turn to the maximum points of $E(y,p, \mu)$. For $a$ as
in (\ref{42m}) and $p=6$, figure \ref{fig_4} presents the dependence of
$E$ on $y$ for $\mu_1 = - 2.3080$ (curve a), and $\mu_2 = - 1.4700$
(curve b). This provides an illustration to (\ref{Na}). The curve
plotted in figure \ref{fig_5} corresponds to the critical value of
$\mu$ defined by the condition $D(\mu)=0$. That is, $(p,\mu_\text{c})$ with
$p=6$ and $\mu_\text{c}=-1.890291$ is the phase coexistence point whose
existence  was proved in theorem \ref{1tm}. Figure \ref{fig_6}
presents the dependence of $E(\bar{y}_1(\mu), p, \mu)$ (line 1, red)
and $E(\bar{y}_2(\mu), p, \mu)$ (line 2, blue) on $\mu\in M_p$,
$p=6$, cf. (\ref{31}). Their intersection occurs at
$\mu=\mu_\text{c}=-1.890291$.

\begin{figure}[!b]
	\begin{centering}
		a)\includegraphics[width=150pt]{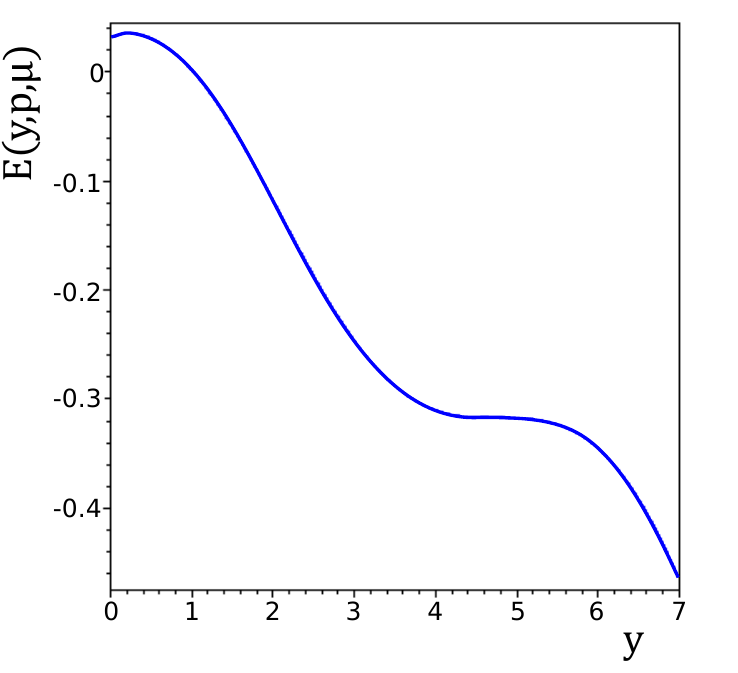} b)
		\includegraphics[width=150pt]{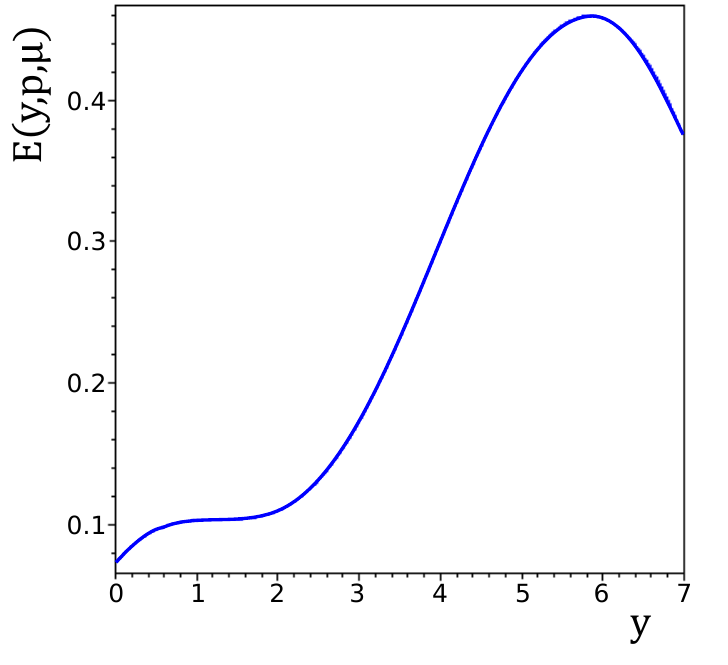}
		\caption{(Colour online)  Plot of the function $E(y, p,\mu)$ for $p=6$, $a=1.2$ and
			$\mu_1 = -2.3080$ (curve a), $\mu_2=-1.4700$ (curve b). }
		\label{fig_4}
	\end{centering}
\end{figure}

\begin{figure}[!t]
	\includegraphics[width=0.41\textwidth]{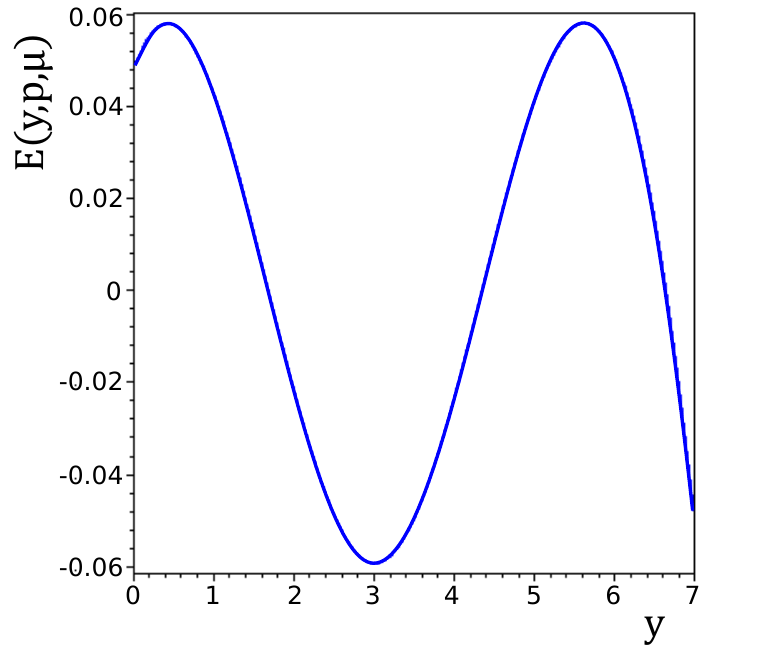}
	\hspace{1.5cm}
	\includegraphics[width=0.38\textwidth]{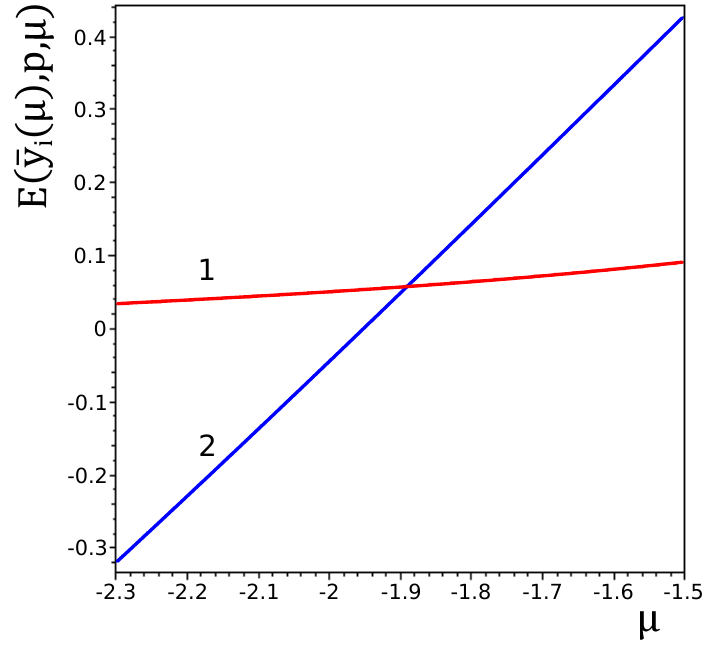}
	\\
	\parbox[t]{0.48\textwidth}{
		\vspace{-0.3cm}
		\caption{(Colour online) The same as in figure \ref{fig_4} for $\mu = \mu_\text{c} = - 1.890291$.}
		\label{fig_5}
	}
	\hfill
	\parbox[t]{0.48\textwidth}{
		\vspace{-0.3cm}
		\caption{(Colour online) Plot of the pressure as a function of temperature in the extremum points of the compressibility. Plot of the functions $M_p \ni \mu \mapsto E(y_i(\mu), p, \mu)$, $i=1,2$, see (\ref{31}), and $p=6$, $a =1.2$. }
		\label{fig_6}
	}
\end{figure}

\begin{figure}[!t]
	\begin{centering}
		\includegraphics[width=250pt]{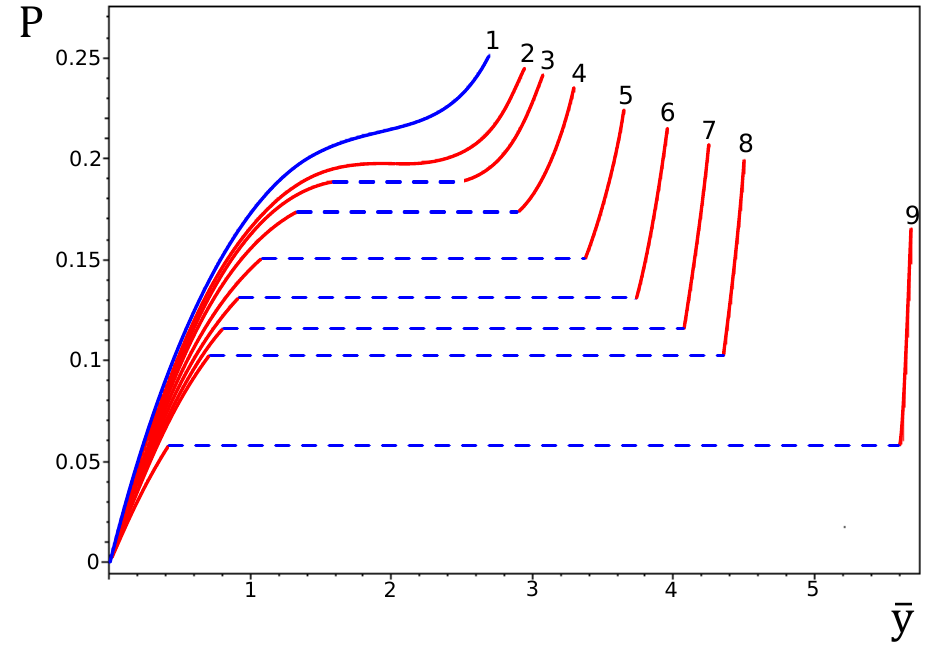}
		\caption{(Colour online) Plot of the dependence of the pressure on $\bar y = p
			\bar{n}$ (isotherms) see (\ref{21}) and (\ref{22}).  Curve 1
			corresponds to $p=3.8<p_\text{c}$. The curves 2--9 correspond to $p\geqslant
			p_\text{c}$:
			$p=p_\text{c}$  (curve 2),
			$p=4$ (curve 3),  $p=4.135$ (curve 4), $p=4.3647$  (curve 5),
			$p=4.5824$  (curve 6), $p=4.8$ (curve 7),  $p=5$ (curve 8), $p=6$
			(curve 9). } \label{fig_7}
	\end{centering}
\end{figure}

The curves plotted in figure \ref{fig_7} present the isotherms --- the
dependence of the pressure on $\bar{y}$, which is equivalent to the
dependence on the density $\bar{n}$, see (\ref{oct}), calculated
from (\ref{22}) for a number of fixed values of $p$.

\newpage

\section{Concluding remarks}

In this work, we proved the existence of multiple thermodynamic
phases at the same values of the extensive model parameters ---
temperature and chemical potential. In contrast to the approach of
\cite{LebP}, we deal directly with thermodynamic phases in the grand
canonical setting. To the best of our knowledge, this is the first
result of this kind.

\section*{Acknowledgements}

This work was supported in part by the European Commission (Seventh Framework Programme) under the project STREVCOMS PIRSES-2013-612669. Yuri Kozitsky was also supported by National Science Centre (NCN), Poland, grant 2017/25/B/ST1/00051, which is cordially acknowledged by him.


\ukrainianpart

\title{Фазовий перехід у системі із парною взаємодією Кюрі-Вейса}

\author[]{Ю.В. Козицький\refaddr{label1},
	М.П. Козловський\refaddr{label2}, О.А. Добуш\refaddr{label2}}
\addresses{
	\addr{label1} Інститут математики, Університет Марії Кюрі-Склодовської, \\пл. Марії Кюрі-Склодовської, 1, 20-031 Люблін, Польщa	\addr{label2} Інститут фізики конденсованих систем НАН України, вул. Свєнціцького, 1, 79011 Львів, Україна 
}

\makeukrtitle

\begin{abstract}
\tolerance=3000%

У роботі досліджено однокомпонентну неперервну систему частинок із взаємодією Кюрі-Вейса. Частинки перебувають у просторі  $\mathds{R}^d$, поділеному на одинакові кубічні комірки. Для області $V\subset
\mathds{R}^d$, що складається з $N\in \mathds{N}$ комірок, кожні дві частинки, що містяться в $ V $, притягують одна одну з інтенсивністю $ J_1 / N $. Частинки, що містяться в одній комірці, попарно відштовхуються з інтенсивністю $ J_2> J_1 $. Для фіксованих значень температури, інтенсивності взаємодії та хімічного потенціалу термодинамічна фаза визначається як міра ймовірності на просторі зайнятих чисел комірок, що визначається умовою, типовою для теорій Кюрі-Вейса. Доведено, що напівплощина $J_1\,\times\,$\textit{хімічний потенціал} містить точки фазового співіснування, при яких існують дві термодинамічні фази системи. Отримано рівняння стану для цієї системи. 

\keywords молекулярне поле, рівняння стану, співіснування фаз

\end{abstract}

\end{document}